\definecolor{darkred}{rgb}{0.8,0,0}
\newtheorem{defn}{Definition}
\newtheorem{thm}{Theorem}
\newtheorem{prop}[thm]{Proposition}
\newtheorem{lem}[thm]{Lemma}
\DeclareMathOperator{\rk}{rk}
\DeclareMathOperator{\brk}{\underline{rk}}
\DeclareMathOperator{\slto}{\xrightarrow{SLOCC}}
\DeclareMathOperator{\mean}{\mathbb{E}}
\DeclareMathOperator{\supp}{supp}
\newcommand{\cut}[2]{#1:#2}
\begin{document}

\title{Asymptotic entanglement transformation between W and GHZ states}
\author{P\'eter Vrana}
\affiliation{Institute for Theoretical Physics, ETH Z\"urich, Wolfgang-Pauli-Strasse 27, CH-8093 Z\"urich, Switzerland}
\affiliation{Department of Geometry, Budapest University of Technology and Economics, Egry J\'ozsef u. 1., 1111 Budapest, Hungary}
\author{Matthias Christandl}
\affiliation{Institute for Theoretical Physics, ETH Z\"urich, Wolfgang-Pauli-Strasse 27, CH-8093 Z\"urich, Switzerland}
\affiliation{Department of Mathematical Sciences, University of Copenhagen, Universitetsparken 5, 2100 Copenhagen, Denmark}

\date{\today}

\begin{abstract}
We investigate entanglement transformations with stochastic local operations and classical communication (SLOCC) in an asymptotic setting using the concepts of degeneration and border rank of tensors from algebraic complexity theory. Results well-known in that field imply that GHZ states can be transformed into W states at rate $1$ for any number of parties. As a generalization, we find that the asymptotic conversion rate from GHZ states to Dicke states is bounded as the number of subsystems increase and the number of excitations is fixed. By generalizing constructions of Coppersmith and Winograd and by using monotones introduced by Strassen we also compute the conversion rate from W to GHZ states.
\end{abstract}

\maketitle

\section{Introduction}

Understanding entanglement in multipartite quantum states is one of the major goals in quantum information theory. While entanglement in bipartite pure states is well understood both in the local operations and classical communication (LOCC) and stochastic local operations and classical communication (SLOCC) paradigms at the single copy level and asymptotically, much less is known for three or more parties. Among the few results are the complete classification of pure three and four qubit states under SLOCC equivalence \cite{threeqbit,fourqbit}. Already for three qubits one finds two incomparable genuinely tripartite-entangled classes, indicating the complexity of the problem.

Recently a connection between algebraic complexity theory and the study of SLOCC transformations in an asymptotic setting has been discovered \cite{tripartite}, opening the possibility to transfer ideas from one field to the other. More precisely, it has been observed that finding the rate at which triples of EPR pairs shared among three parties can be extracted from GHZ states is the same as finding the exponent of matrix multiplication, commonly denoted by $\omega$, a problem that has been studied for over 40 years by mathematicians \cite{Strassen1,Bini,Sch,CW}. This number is the infimum of real numbers $\tau$ such that $n\times n$ matrices can be multiplied together using $O(n^\tau)$ arithmetic operations. Later, bounds on the tensor rank of multiple copies of the W state have been found \cite{rkWn}, as well as its generalization to more than three parties and other symmetric states \cite{catalysis}.

In this paper we further explore this connection and show that border rank and degeneration, two important concepts in algebraic complexity, can be used to prove nontrivial bounds on asymptotic conversion rates. These techniques are well-known in algebraic complexity theory, but seem not to have been applied so far in the present context.

The structure of the paper is as follows. In section \ref{sec:aSLOCC} we provide the definition of asymptotic SLOCC conversion rates between two states and give some basic properties, including its connection to tensor rank and asymptotic rank. In section \ref{sec:border} we introduce the concepts of degeneration and border rank into the study of asymptotic SLOCC transformations and illustrate their usefulness by computing the conversion rate from GHZ to (generalized) W states for any number of parties. Section \ref{sec:symm} extends the latter result to an upper bound on the conversion rate from GHZ states to certain families of symmetric states, uniformly in the number of subsystems. In section \ref{sec:WtoGHZ} we compute the conversion rate from W to GHZ states. The construction follows an idea of Coppersmith and Winograd which was used to prove an upper bound on the exponent of matrix multiplication. Optimality is shown using the monotones introduced by Strassen.

During the preparation of this manuscript we have learned about independent related work by Yu, Guo and Duan \cite{Wrate1} where they give a proof of theorem \ref{thm:GHZtoW}.

\section{Asymptotic SLOCC transformations}\label{sec:aSLOCC}

Given a pair of $k$-partite pure states $\psi\in\mathcal{H}_1\otimes\cdots\otimes\mathcal{H}_k$ and $\varphi\in\mathcal{K}_1\otimes\cdots\otimes\mathcal{K}_k$ we denote by $\psi\slto\varphi$ the fact that there exist linear transformations $A_i:\mathcal{H}_i\to\mathcal{K}_i$ such that $\varphi=(A_1\otimes\cdots\otimes A_k)\psi$. This is the well-known mathematical condition for the ability of $k$ parties to transform the state $\psi$ into $\varphi$ with nonzero probability, each being able to control one of the subsystems, while their actions are allowed to be coordinated via classical communication \cite{threeqbit}. It is clear that this definition completely ignores the normalization of the states (in fact, only the vector space structure is used and not the norm). For simplicity we therefore prefer to work with unnormalized states.

We are interested in transformations of multiple copies where the number of initial and final copies may be different. The relevant quantity is
\begin{equation}
\omega_n(\psi,\varphi):=\frac{1}{n}\inf\{m\in\mathbb{N}|\psi^{\otimes m}\slto\varphi^{\otimes n}\}
\end{equation}
where the infimum of the empty set is considered to be $\infty$.
It is easy to see that $(n_1+n_2)\omega_{n_1+n_2}(\psi,\varphi)\le n_1\omega_{n_1}(\psi,\varphi)+n_2\omega_{n_2}(\psi,\varphi)$, which implies that the limit $\omega(\psi,\varphi):=\lim_{n\to\infty}\omega_n(\psi,\varphi)$ exists and is equal to $\inf\omega_n(\psi,\varphi)$.

It was observed by Chitambar et al. \cite{tripartite} that when $k=3$ and we let $\psi=GHZ$ and $\varphi$ be the triple of EPR pairs, one shared between each pair of subsystems, then $\omega(\psi,\varphi)$ is precisely the exponent of matrix multiplication, the smallest real number $\tau$ such that for any $\varepsilon>0$ two $n\times n$ matrices can be multiplied using $O(n^{\tau+\varepsilon})$ arithmetic operations. This exponent is traditionally denoted by $\omega$ -- our notation is chosen so that it emphasizes this connection to algebraic complexity.
SLOCC transformation rates have also been investigated in ref. \cite{Wrate1}, the relation between our quantity and theirs is
\begin{equation}
R(\psi,\varphi)=\frac{1}{\omega(\psi,\varphi)}
\end{equation}

For $S\subseteq[k]$ let $\rk_S\psi$ denote the Schmidt rank of $\psi$, considered as a bipartite state on the subsystems $S$ and $\bar{S}=[k]\setminus S$. The following facts are simple consequences of the definition, and therefore we omit the proofs.
\begin{prop}\label{prop:basicn}
For any choice of the appearing states and bipartitions the followings hold:
\begin{enumerate}
\item $\displaystyle\omega_n(\psi,\varphi)\ge\max_{S\subseteq[k]}\frac{\log\rk_S\varphi}{\log\rk_S\psi}$
\item $\omega_n(\varphi,\varphi)=1$
\item $\omega_n(\varphi_1,\varphi_3)\le\omega_{n\omega_n(\varphi_2,\varphi_3)}(\varphi_1,\varphi_2)\omega_n(\varphi_2,\varphi_3)$
\item $\omega_n(\psi,\varphi_1\otimes\varphi_2)\le\omega_n(\psi,\varphi_1)+\omega_n(\psi,\varphi_2)$
\item $\omega_n(\psi_1\otimes\psi_2,\varphi)\le\max\{\frac{\lfloor\alpha n\rfloor}{n}\omega_{\lfloor\alpha n\rfloor}(\psi_1,\varphi),\frac{n-\lfloor\alpha n\rfloor}{n}\omega_{n-\lfloor\alpha n\rfloor}(\psi_2,\varphi)\}$ for any $0\le\alpha\le 1$.
\item $\omega_n(\psi_1\otimes\psi_2,\varphi_1\otimes\varphi_2)\le\max\{\omega_n(\psi_1,\varphi_1),\omega_n(\psi_2,\varphi_2)\}$
\item $\omega_n(\psi^{\otimes b},\varphi^{\otimes a})=\frac{1}{n}\left\lceil\frac{n a}{b}\omega_{n a}(\psi,\varphi)\right\rceil$
\end{enumerate}
\end{prop}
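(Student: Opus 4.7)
My plan is to verify the seven claims directly from the definition of $\omega_n$, using three elementary tools. (i) The Schmidt rank $\rk_S(\cdot)$ is multiplicative under tensor products and nonincreasing under SLOCC. (ii) For any nonzero $k$-partite state $\chi$ one can choose local functionals $f_1,\ldots,f_k$ with $(f_1\otimes\cdots\otimes f_k)\chi\ne 0$ (pick any nonvanishing coefficient in a product basis); this lets me discard extra copies of $\chi$ inside any SLOCC transformation, with the resulting scalar absorbed into one of the surviving local maps. (iii) SLOCC transformations compose and tensor locally.

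With these in hand, items 1 and 2 will follow at once: item 1 is the observation that $(\rk_S\psi)^m\ge(\rk_S\varphi)^n$ is necessary for $\psi^{\otimes m}\slto\varphi^{\otimes n}$, and item 2 combines this with the trivial upper bound $m=n$ (assuming $\varphi$ is genuinely entangled across some bipartition, so item 1 yields $\omega_n\ge 1$). For item 3 I will chain the two transformations that achieve the defining infima: with $a=\omega_n(\varphi_2,\varphi_3)$ and $b=\omega_{na}(\varphi_1,\varphi_2)$, noting that $na$ and $nab$ are integers, composing $\varphi_1^{\otimes nab}\slto\varphi_2^{\otimes na}\slto\varphi_3^{\otimes n}$ gives the bound $ab$. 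Item 4 tensors two transformations sharing the source $\psi$: $\psi^{\otimes na_1}\otimes\psi^{\otimes na_2}\slto\varphi_1^{\otimes n}\otimes\varphi_2^{\otimes n}$, where $a_i=\omega_n(\psi,\varphi_i)$.

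For items 5 and 6 the discarding tool enters. For (5), I set $n_1=\lfloor\alpha n\rfloor$, $n_2=n-n_1$, $m_i=n_i\omega_{n_i}(\psi_i,\varphi)$, and $M=\max(m_1,m_2)$; from $\psi_1^{\otimes M}\otimes\psi_2^{\otimes M}$ I will discard $M-m_i$ copies of $\psi_i$ on each side and apply the two component transformations in parallel to reach $\varphi^{\otimes n_1+n_2}=\varphi^{\otimes n}$, yielding $\omega_n\le M/n$. Item (6) is the analogous construction with two distinct targets. Finally for item 7 I will unpack the definition to $n\omega_n(\psi^{\otimes b},\varphi^{\otimes a})=\min\{m\in\mathbb{N}:\psi^{\otimes bm}\slto\varphi^{\otimes an}\}$; writing $M_{\min}=na\cdot\omega_{na}(\psi,\varphi)$ the condition becomes $bm\ge M_{\min}$, forcing $m=\lceil M_{\min}/b\rceil$, with the excess $bm-M_{\min}$ copies of $\psi$ discarded via tool (ii).

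I do not expect any step to be genuinely hard; the only real subtlety is the discarding trick, which is why I isolated it as tool (ii). Everything else is bookkeeping with infima, floors, and ceilings, so the authors' decision to omit the proofs is well justified.
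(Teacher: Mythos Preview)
Your proof is correct and is precisely the kind of direct verification the authors had in mind: the paper omits the proof entirely, stating that the seven items ``are simple consequences of the definition.'' Your three tools (monotonicity and multiplicativity of Schmidt rank, the discarding trick via local functionals, and composition/tensoring of SLOCC maps) are exactly what is needed, and your handling of the integer bookkeeping in items 3, 5, and 7 is clean; the only caveat is the one you already flagged for item 2, namely that the lower bound $\omega_n(\varphi,\varphi)\ge 1$ requires $\varphi$ to be entangled across some cut, which the paper leaves implicit.
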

By taking the limit $n\to\infty$ we get the following useful asymptotic forms:
\begin{prop}[Basic properties of asymptotic SLOCC conversion rate]\label{prop:basica}
For any choice of the appearing states and any subset $S\subseteq[k]$ the followings hold:
\begin{enumerate}
\item $\displaystyle\omega(\psi,\varphi)\ge\max_{S\subseteq[k]}\frac{\log\rk_S\varphi}{\log\rk_S\psi}$
\item $\omega(\varphi,\varphi)=1$
\item $\omega(\varphi_1,\varphi_3)\le\omega(\varphi_1,\varphi_2)\omega(\varphi_2,\varphi_3)$
\item $\omega(\psi,\varphi_1\otimes\varphi_2)\le\omega(\psi,\varphi_1)+\omega(\psi,\varphi_2)$
\item $\omega(\psi_1\otimes\psi_2,\varphi)\le\frac{1}{\frac{1}{\omega(\psi_1,\varphi)}+\frac{1}{\omega(\psi_2,\varphi)}}$
\item $\omega(\psi_1\otimes\psi_2,\varphi_1\otimes\varphi_2)\le\max\{\omega(\psi_1,\varphi_1),\omega(\psi_2,\varphi_2)\}$
\item $\omega(\psi^{\otimes b},\varphi^{\otimes a})=\frac{a}{b}\omega(\psi,\varphi)$
\end{enumerate}
\end{prop}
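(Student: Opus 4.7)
The plan is to derive each of the seven items of Proposition~\ref{prop:basica} by taking the limit $n\to\infty$ in the corresponding item of Proposition~\ref{prop:basicn}, using the identity $\omega(\psi,\varphi)=\lim_{n\to\infty}\omega_n(\psi,\varphi)=\inf_n\omega_n(\psi,\varphi)$ established in the paragraph preceding Proposition~\ref{prop:basicn}. Since the seven finite-$n$ statements are assumed known, the task here is purely limit-exchange bookkeeping: I need to verify that any auxiliary subindex appearing in the bounds also tends to infinity, and that free parameters can be optimized after passing to the limit.

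Items (1), (2), (4), (6), and (7) come out almost mechanically. In (1) the right-hand side is independent of $n$, so taking $\inf_n$ on the left preserves the inequality. Item (2) is immediate since $\omega_n(\varphi,\varphi)=1$ for every $n$. For (4) and (6) the right-hand sides in the finite version converge term-by-term to the claimed asymptotic bounds. For (7), I would start from the finite equality $\omega_n(\psi^{\otimes b},\varphi^{\otimes a})=\frac{1}{n}\lceil\frac{na}{b}\omega_{na}(\psi,\varphi)\rceil$, observe that the ceiling contributes at most $1/n$ and that $na\to\infty$ forces $\omega_{na}(\psi,\varphi)\to\omega(\psi,\varphi)$, so the right-hand side converges to $\frac{a}{b}\omega(\psi,\varphi)$; being an equality at every $n$, both directions of (7) follow simultaneously.

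The two items that need a bit more care are (3) and (5). For (3), the subindex $m_n:=n\omega_n(\varphi_2,\varphi_3)$ on the right of the finite bound itself varies with $n$; provided $\omega(\varphi_2,\varphi_3)>0$ (the degenerate case being trivial) one has $m_n\to\infty$, hence $\omega_{m_n}(\varphi_1,\varphi_2)\to\omega(\varphi_1,\varphi_2)$, and multiplying by $\omega_n(\varphi_2,\varphi_3)\to\omega(\varphi_2,\varphi_3)$ yields the claim. For (5), I fix $\alpha\in[0,1]$, take $n\to\infty$ in the finite-$n$ bound (using $\lfloor\alpha n\rfloor/n\to\alpha$ and convergence of $\omega_{\lfloor\alpha n\rfloor}$ when $\alpha>0$, with the endpoints $\alpha\in\{0,1\}$ handled separately), obtaining $\omega(\psi_1\otimes\psi_2,\varphi)\le\max\{\alpha\omega(\psi_1,\varphi),(1-\alpha)\omega(\psi_2,\varphi)\}$, and then optimize over $\alpha$ by balancing the two entries of the max: the minimum is attained at $\alpha=\omega(\psi_2,\varphi)/(\omega(\psi_1,\varphi)+\omega(\psi_2,\varphi))$, collapsing the bound to the harmonic-mean expression claimed.

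I expect the only real obstacle to be clean handling of the edge cases where some $\omega(\cdot,\cdot)$ equals $0$ or $\infty$: with the natural conventions $1/0=\infty$, $1/\infty=0$ and the observation that any bound whose right-hand side is $\infty$ is vacuous, the limit-exchange steps above remain legitimate, and the asymptotic proposition follows from its finite counterpart without any new ingredient.
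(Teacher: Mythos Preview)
Your proposal is correct and follows exactly the approach the paper indicates: the paper simply introduces Proposition~\ref{prop:basica} with the sentence ``By taking the limit $n\to\infty$ we get the following useful asymptotic forms'' and gives no further argument, so your detailed limit-exchange bookkeeping (in particular the handling of the drifting subindex in (3) and the balancing optimization in (5)) is precisely the content the paper leaves implicit.
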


We will make use of the $a$-level generalization of GHZ states:
\begin{equation}
GHZ_a=\sum_{i=1}^a|ii\ldots i\rangle
\end{equation}
where the number of parties should always be clear from the context. We omit the subscript when $a=2$.

In algebraic complexity these are called unit tensors and have a special role due to their connection to tensor rank \cite{Gastinel,Strassen2,tripartite}. Tensor rank itself can be seen as a generalization of matrix rank (or Schmidt rank), and is defined as the minimum number of product states spanning a subspace containing a given tensor. We denote tensor rank by $\rk$.
\begin{prop}
Let $k\in\mathbb{N}$. For any $k$-partite state $\psi$ the followings hold:
\begin{enumerate}
\item $\omega_n(GHZ_a,\psi)=\frac{1}{n}\lceil\log_a\rk\psi^{\otimes n}\rceil$
\item $\omega(GHZ_a,\psi)=\frac{1}{\log a}\lim_{n\to\infty}\frac{1}{n}\log\rk\psi^{\otimes n}$
\item $\omega_n(GHZ_a,GHZ_b)=\frac{1}{n}\left\lceil\frac{n\log b}{\log a}\right\rceil$
\item $\omega(GHZ_a,GHZ_b)=\frac{\log b}{\log a}$
\end{enumerate}
\end{prop}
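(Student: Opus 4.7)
The plan hinges on an elementary identification: grouping the $m$ tensor factors $\mathbb{C}^a$ held by each of the $k$ parties into a single space $\mathbb{C}^{a^m}$ turns $GHZ_a^{\otimes m}$ into $GHZ_{a^m}$, while the very definition of tensor rank says that $GHZ_r\slto\varphi$ if and only if $\rk\varphi\le r$. (Any rank-$r$ decomposition $\varphi=\sum_{i=1}^r v_i^{(1)}\otimes\cdots\otimes v_i^{(k)}$ is obtained from $GHZ_r$ by applying the product map $A_1\otimes\cdots\otimes A_k$ with $A_j:|i\rangle\mapsto v_i^{(j)}$, and conversely any such product map applied to $GHZ_r$ yields a rank-$r$ decomposition of its image.) Putting the two together, for $\psi\neq 0$,
\[
\omega_n(GHZ_a,\psi)=\frac{1}{n}\inf\{m\in\mathbb{N}\mid a^m\ge\rk\psi^{\otimes n}\}=\frac{1}{n}\lceil\log_a\rk\psi^{\otimes n}\rceil,
\]
which is part 1.

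For part 2 I would pass to the limit $n\to\infty$ in part 1. Tensoring two decompositions gives the submultiplicativity $\rk(\psi^{\otimes(m+n)})\le\rk\psi^{\otimes m}\cdot\rk\psi^{\otimes n}$, so $n\mapsto\log\rk\psi^{\otimes n}$ is subadditive, and Fekete's lemma guarantees that $\lim_{n\to\infty}\frac{1}{n}\log\rk\psi^{\otimes n}$ exists and equals the infimum. The ceiling in part 1 contributes at most $1/n$ after dividing, which vanishes in the limit, yielding the displayed formula.

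Parts 3 and 4 are the specialization $\psi=GHZ_b$ of the first two, provided one computes $\rk GHZ_b^{\otimes n}=b^n$. The upper bound is the defining $b^n$-term sum of $GHZ_{b^n}$ under the same index-grouping identification used above. The matching lower bound uses two standard facts: tensor rank dominates the bipartite Schmidt rank across every cut (a rank decomposition, viewed across the cut, is a sum of bipartite product vectors), and the Schmidt rank of $GHZ_{b^n}$ across any nontrivial cut is $b^n$. Plugging $\rk GHZ_b^{\otimes n}=b^n$ into parts 1 and 2 then yields the claimed formulas. No single step here is really an obstacle; the only spot demanding genuine care is the passage from part 1 to part 2, where both Fekete's lemma and the $O(1/n)$ ceiling error must be handled simultaneously.
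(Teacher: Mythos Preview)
Your argument is correct and follows essentially the same route as the paper: identify $GHZ_a^{\otimes m}$ with $GHZ_{a^m}$, use that $GHZ_r\slto\varphi$ iff $r\ge\rk\varphi$, and then specialize and pass to the limit. You are merely more explicit in two places (the Schmidt-rank lower bound for $\rk GHZ_b^{\otimes n}$ and the use of Fekete's lemma), whereas the paper relies on the already established existence of $\omega=\lim\omega_n$ to absorb both the limit and the ceiling error.
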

\begin{proof}
\item It is known that $GHZ_m\slto\psi^{\otimes n}$ iff $m\ge \rk \psi^{\otimes n}$ \cite{tripartite}. In the sequence $GHZ_a, {GHZ_a}^{\otimes 2}=GHZ_{a^2}, {GHZ_a}^{\otimes 3}=GHZ_{a^3}, \ldots$ of unit tensors of rank $a^i$ the first one with rank not less than $\rk \psi^{\otimes n}$ has index $\lceil\log_a \rk \psi^{\otimes n}\rceil$. Now let $n\to\infty$ and use $\rk (GHZ_b)^{\otimes n}=b^n$ to get the remaining three equalities.
\end{proof}
The quantity $2^{\omega(GHZ,\psi)}$ is also known as the asymptotic rank of $\psi$.

Since the rank of a tensor is always finite, one can extract any state from GHZ states at a positive rate ($\omega(GHZ,\psi)<\infty$). In the other direction, a state clearly needs to be globally entangled if we are to distill GHZ states out of many copies of it, i.e. it cannot be biseparable across any bipartite cut. For two and three subsystems it is easy to see that this condition is also sufficient \cite{Strassen3}. It turns out that this is also true for more parties, as the following lemma shows, but the argument is more complicated in this case. A similar result is proved for exact LOCC transformations in ref. \cite{Comparability}.
\begin{lem}
Let $\psi$ be a globally entangled $k$-partite state and let $1\le i,j\le k$ label two specified subsystems. If $EPR_{i,j}$ denotes the state which consists of an EPR-pair shared between subsystems $i$ and $j$, and the rest is separable, then $\omega(\psi,EPR_{i,j})<\infty$.
\end{lem}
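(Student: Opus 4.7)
The plan is to induct on the number of parties $k$, reducing at each step by projecting one of the parties other than $i,j$ onto a single vector in a way that preserves global entanglement. For the base case $k=2$ the statement is immediate: a globally entangled bipartite state has Schmidt rank at least $2$, hence SLOCC-converts to an EPR pair in one copy, giving $\omega(\psi, EPR_{i,j})\le 1$.

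For the inductive step ($k\ge 3$) I would pick any party $\ell\neq i,j$ and prove the following key claim: there exists $v\in\mathcal{H}_\ell$ such that the contracted state $\langle v|_\ell\,\psi$ is globally entangled as a $(k-1)$-partite state on $[k]\setminus\{\ell\}$. Granting this, the rank-one map $|x\rangle\mapsto\langle v|x\rangle$ on party $\ell$ provides a one-copy SLOCC transformation $\psi\slto\langle v|_\ell\,\psi$, and composing with the SLOCC transformations guaranteed by the inductive hypothesis $\omega(\langle v|_\ell\,\psi,EPR_{i,j})<\infty$ yields $\omega(\psi,EPR_{i,j})<\infty$ via property~3 of Proposition~\ref{prop:basica}.

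To prove the key claim I would argue by contradiction: suppose that for every $v\in\mathcal{H}_\ell$ the state $\langle v|_\ell\,\psi$ is biseparable across some bipartition $(S,T)$ of $[k]\setminus\{\ell\}$. For each such bipartition the set $V_{S,T}=\{v : \langle v|_\ell\,\psi \text{ has Schmidt rank }\le 1 \text{ across }S:T\}$ is Zariski-closed, being cut out by the $2\times 2$ minors of the corresponding flattening (which are polynomial in $v$). The finitely many $V_{S,T}$ cover the irreducible variety $\mathcal{H}_\ell$, so some fixed $V_{S,T}$ must equal all of $\mathcal{H}_\ell$. The image of the linear map $v\mapsto\langle v|_\ell\,\psi$ is then a linear subspace of $\mathcal{H}_S\otimes\mathcal{H}_T$ whose nonzero elements all have Schmidt rank $1$ across $S:T$; by the classical structure theorem for such ``totally product'' subspaces, it must be contained in either $|\alpha\rangle_S\otimes\mathcal{H}_T$ or $\mathcal{H}_S\otimes|\beta\rangle_T$. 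In the first case $\psi=|\alpha\rangle_S\otimes\chi$ for some $\chi\in\mathcal{H}_\ell\otimes\mathcal{H}_T$, contradicting global entanglement across the cut $S:(\{\ell\}\cup T)$; the second case is symmetric. The main obstacle is precisely this structure theorem for totally-product subspaces, a classical fact resting on the observation that a sum $|\alpha_1\rangle|\beta_1\rangle+|\alpha_2\rangle|\beta_2\rangle$ is rank one only when $|\alpha_1\rangle\parallel|\alpha_2\rangle$ or $|\beta_1\rangle\parallel|\beta_2\rangle$.
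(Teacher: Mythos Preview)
Your proof is correct and follows the same overall strategy as the paper: induction on $k$, projecting out a party $\ell\neq i,j$ via a rank-one map, and using that the ``bad'' sets (projections yielding a biseparable state across a fixed cut) are Zariski closed in the irreducible affine space $\mathcal{H}_\ell$.

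The only difference is in how the key claim is established. The paper works with the complementary \emph{open} sets $A_S$ and shows each is nonempty by an explicit construction from the Schmidt decomposition $\psi=\sum_p|p\rangle_\ell\otimes\psi_p$: either some $\psi_p$ is already not $S$-biseparable, or one can find $p_1,p_2$ whose $S$-factors and $\bar S$-factors are both non-parallel, so that projecting onto $|p_1\rangle+|p_2\rangle$ works. You instead assume the closed sets $V_{S,T}$ cover $\mathcal{H}_\ell$, use irreducibility to pin down a single cut $S:T$, and invoke the structure theorem for linear subspaces of $\mathcal{H}_S\otimes\mathcal{H}_T$ consisting entirely of rank-$\le 1$ tensors. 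The two arguments are essentially contrapositives of one another: the paper's Case~2 dichotomy (``either all $S$-parts are parallel or all $\bar S$-parts are parallel, else we find a good pair'') is exactly the statement you package as the structure theorem. Your phrasing is slightly more abstract and cleanly isolates the linear-algebra fact; the paper's is more hands-on and yields an explicit good functional.
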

\begin{proof}
We prove by induction on the number of parties. For $k=2$ we have nothing to prove. If $k>2$ then there is a subsystem $c$ with $i\neq c\neq j$. Our goal is to find a local operation acting at this subsystem resulting in a state $|0\rangle_c\otimes\psi'$ where $\psi'$ is a globally entangled state on the remaining $k-1$ subsystems. Such an operation is clearly characterized by a linear map $\mathcal{H}_c\to\mathbb{C}$. For a proper subset $S\subseteq[k]\setminus\{c\}$ let us introduce the set
\begin{equation}
A_S=\{P:\mathcal{H}_c\to\mathbb{C}|\text{$(P\otimes I\otimes\cdots\otimes I)\psi$ is not biseparable across the cut $\cut{S}{[k]\setminus (S\cup\{c\})}$}\}
\end{equation}
These are a Zariski open subsets of the vector space of linear forms on $\mathcal{H}_c$, an irreducible affine variety. Our goal is to show that none of them is empty, which implies that their intersection is also not empty.

To this end let us do a Schmidt decomposition with respect to the bipartition $\cut{\{c\}}{[k]\setminus\{c\}}$:
\begin{equation}
\psi=\sum_{p=1}^r|p\rangle_c\otimes\psi_p
\end{equation}
if for some $p$ the state $\psi_p$ is not $\cut{S}{[k]\setminus (S\cup\{c\})}$-biseparable then the map $|p'\rangle\mapsto\delta_{pp'}|0\rangle$ is in $A_S$. On the other hand, if for every $p$ we have biseparability $\psi_p=\varphi^{S}_p\otimes\varphi^{S\cup\{c\}}_p$ with respect to this cut then we can always find a $p_1$ and $p_2$ such that $\varphi^{S}_{p_1}$ is not a multiple of $\varphi^{S}_{p_2}$ and $\varphi^{S\cup\{c\}}_{p_1}$ is not a multiple of $\varphi^{S\cup\{c\}}_{p_2}$ -- otherwise $\psi$ would not be globally entangled. In this case the map $|p'\rangle\mapsto(\delta_{p_1p'}+\delta_{p_2p'})|0\rangle$ is in $A_S$.
\end{proof}

Since $EPR_{1,2}\otimes EPR_{1,3}\otimes\cdots\otimes EPR_{1,k}$ can be converted to a GHZ via SLOCC using teleportation, this implies that
\begin{equation}
\begin{split}
\omega(\psi,GHZ)
 & \le\omega(\psi,EPR_{1,2}\otimes EPR_{1,3}\otimes\cdots\otimes EPR_{1,k})  \\
 & \le\omega(\psi,EPR_{1,2})+\omega(\psi,EPR_{1,3})+\cdots+\omega(\psi,EPR_{1,k})<\infty
\end{split}
\end{equation}

\section{Degeneration and border rank}\label{sec:border}

It is a standard fact that a GHZ state cannot be transformed into a W state by SLOCC \cite{threeqbit}, forming distinct entanglement classes of three qubits, but a W state can be approximated to arbitrary precision with states in the GHZ orbit \cite{Wappr}. Even though $\rk W=3$, it can be approximated by rank 2 GHZ states, and we say that its border rank is 2 (notation: $\brk W=2$).

This phenomenon is known as degeneration in algebraic complexity, and is important in the study of the complexity of tensor powers \cite{Bini,Strassen3}. More generally, we say that $\psi$ degenerates to $\varphi$ iff $\varphi$ is in the orbit closure of $\psi$ under the action of SLOCC. We remark that the closure in the Zariski topology is the same as that in the Euclidean topology. The following alternative definition is more convenient for calculations:
\begin{defn}
Let $\psi\in\mathcal{H}_1\otimes\cdots\otimes\mathcal{H}_k$ and $\varphi\in\mathcal{K}_1\otimes\cdots\otimes\mathcal{K}_k$ be two pure states. We say that $\psi$ \emph{degenerates} to $\varphi$ if there exist linear transformations $A_i(\varepsilon):\mathcal{H}_i\to\mathcal{K}_i$ depending polynomially on $\varepsilon$ such that
\begin{equation}
(A_1(\varepsilon)\otimes\cdots\otimes A_k(\varepsilon))\psi=\varepsilon^d\varphi+O(\varepsilon^{d+1})
\end{equation}
for some $d\in\mathbb{N}$.
\end{defn}
It can be shown that over algebraically closed base fields this algebraic definition is equivalent to the analytic one described above, see e.g. \cite{Burgisser}.

Just as the rank of a state $\psi$ can be characterized as the smallest $a$ such that $GHZ_a\slto\psi$, the border rank $\brk\psi$ is the smallest $a$ such that $GHZ_a$ degenerates to $\psi$.

We illustrate the concept using the W state as an example. Consider the following equality:
\begin{equation}\label{eq:Wappr1}
\frac{(|0\rangle+\varepsilon|1\rangle)\otimes(|0\rangle+\varepsilon|1\rangle)\otimes(|0\rangle+\varepsilon|1\rangle)-|000\rangle}{\varepsilon}=(|100\rangle+|010\rangle+|001\rangle)+\varepsilon(|011\rangle+|101\rangle+|110\rangle)+\varepsilon^2|111\rangle
\end{equation}
For any $\varepsilon\neq 0$ the tensor on the right hand side has rank $2$ and in the limit $\varepsilon\to 0$ it becomes the W state, hence $\brk W=2$. Note that the limit $\varepsilon\to 0$ can be seen as the derivative of a polynomial at $\varepsilon=0$. We can also understand the situation in a geometric way: states contained in \emph{secants} to the set of separable states have rank at most 2, while states on a \emph{tangent} to the set of separable states have border rank at most 2. The idea works for higher derivatives as well. The largest degree appearing on the right hand side ($2$ in the example) plays a role later, and is called the error degree of the approximation \cite{Sch}.

In algebraic complexity it is a well-known result that, asymptotically, degeneration and restriction (i.e. SLOCC convertibility) are equivalent \cite{Bini, Sch, Strassen3}. In the language of asymptotic SLOCC transformations the statement translates to the following:
\begin{thm}[Bini, Sch\"onhage, Strassen]
Let $\psi$ and $\varphi$ be $k$-partite states and suppose that $\psi$ degenerates to $\varphi$. Then $\omega(\psi,\varphi)\le 1$.
\end{thm}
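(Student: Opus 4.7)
The plan is to combine Bini's polynomial interpolation trick with the preceding lemma (which, for globally entangled $\psi$, gives $\omega(\psi,GHZ)<\infty$). I will use an auxiliary $GHZ_N$ register to convert a Vandermonde sum of restrictions into a single SLOCC transformation, and produce that register from a logarithmically small number of extra copies of $\psi$, so that the rate is asymptotically $1$.

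Unpacking the definition, there exist polynomial local maps $A_i(\varepsilon)$ and an error degree $h$ such that $T(\varepsilon):=(A_1(\varepsilon)\otimes\cdots\otimes A_k(\varepsilon))\psi=\varepsilon^d\varphi+\varepsilon^{d+1}\varphi_1+\cdots+\varepsilon^{d+h}\varphi_h$. Tensoring the maps $n$ times, $T(\varepsilon)^{\otimes n}=(A_1(\varepsilon)^{\otimes n}\otimes\cdots)\psi^{\otimes n}$ is a polynomial in $\varepsilon$ of degree at most $n(d+h)$ whose coefficient of $\varepsilon^{nd}$ equals $\varphi^{\otimes n}$. Setting $N:=hn+1$ and evaluating at $N$ distinct nonzero scalars $\varepsilon_1,\dots,\varepsilon_N$, Vandermonde inversion yields constants $c_1,\dots,c_N$ with $\varphi^{\otimes n}=\sum_{l=1}^N c_l\,T(\varepsilon_l)^{\otimes n}$, each summand being the SLOCC image of $\psi^{\otimes n}$ under the local map $\bigotimes_i A_i(\varepsilon_l)^{\otimes n}$.

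I then realize this sum as a single SLOCC by adjoining an $N$-level GHZ. Define local maps $B_i:\mathcal{H}_i^{\otimes n}\otimes\mathbb{C}^N\to\mathcal{K}_i^{\otimes n}$ by $B_i(u\otimes|l\rangle)=c_l^{1/k}A_i(\varepsilon_l)^{\otimes n}u$; using $GHZ_N=\sum_l|l,l,\dots,l\rangle$ one checks directly that $(B_1\otimes\cdots\otimes B_k)(\psi^{\otimes n}\otimes GHZ_N)=\sum_l c_l\,T(\varepsilon_l)^{\otimes n}=\varphi^{\otimes n}$, so $\psi^{\otimes n}\otimes GHZ_N\slto\varphi^{\otimes n}$. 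I may assume without loss of generality that $\psi$ is globally entangled (otherwise $\psi$ and the limit $\varphi$ are biseparable along the same cut and one inducts on the factors). The preceding lemma then gives $\omega(\psi,GHZ)<\infty$, so $\psi^{\otimes K}\slto GHZ_N$ for some $K=O(\log N)=O(\log n)$. Concatenating yields $\psi^{\otimes(n+K)}\slto\varphi^{\otimes n}$, so $\omega_n(\psi,\varphi)\le 1+O(\log n/n)$ and hence $\omega(\psi,\varphi)\le 1$ in the limit $n\to\infty$.

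The main obstacle I expect is the step just described: Vandermonde by itself only expresses $\varphi^{\otimes n}$ as a \emph{linear combination} of tensors reachable from $\psi^{\otimes n}$, which is in general strictly weaker than being SLOCC-reachable. The resolution exploits the coherent diagonal structure of $GHZ_N$ as a ``pointer register'' that lets each party's local map select the appropriate evaluation point $\varepsilon_l$ in a coordinated way, and the cost of producing the pointer is absorbed into an $O(\log n)$ correction via the globally-entangled lemma.
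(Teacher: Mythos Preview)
Your argument is correct and follows essentially the same route as the paper's proof: both tensor up the degeneration, use $hn+1$ evaluation points to isolate the $\varepsilon^{nd}$ coefficient, realize the resulting linear combination as a single SLOCC by adjoining a $GHZ_{hn+1}$ pointer, and absorb the logarithmic cost of producing that pointer using $\omega(\psi,GHZ)<\infty$ from the preceding lemma; the biseparable case is dispatched identically. The only cosmetic differences are that the paper chooses the evaluation points to be $(hn+1)$th roots of unity (so the interpolation coefficients are explicit phases, absorbed into a rescaled $GHZ'_{hn+1}$) whereas you use a generic Vandermonde and distribute the coefficients as $c_l^{1/k}$ into the local maps, and the paper packages the final estimate via the chain inequality of Proposition~\ref{prop:basica} rather than your direct $\psi^{\otimes(n+K)}\slto\varphi^{\otimes n}$.
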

The most general proof \cite{Sch, Strassen3} uses some nontrivial algebraic facts, but works for arbitrary base fields. Here we present a simplified version of the argument from \cite{Bini}, which works over algebraically closed fields of characteristic $0$. As the base field $\mathbb{C}$ is the most important in quantum physics, this level of generality is more than enough for our purposes.
\begin{proof}
Suppose first that $\psi$ is globally entangled, i.e. not biseparable across any bipartite cut. Then $\omega(\psi,GHZ)<\infty$. By the assumption we can write
\begin{equation}
(A_1(\varepsilon)\otimes\cdots\otimes A_k(\varepsilon))\psi=\varepsilon^d\varphi+\varepsilon^{d+1}\varphi_1+\cdots+\varepsilon^{d+e}\varphi_{e}
\end{equation}
for some error degree $e\in\mathbb{N}$ and states $\varphi_1,\ldots,\varphi_e$. Now take the $n$th tensor power of both sides. The resulting equation has the form
\begin{equation}
(A_1(\varepsilon)^{\otimes n}\otimes\cdots\otimes A_k(\varepsilon)^{\otimes n})\psi^{\otimes n}=\varepsilon^{nd}\varphi^{\otimes n}+\varepsilon^{nd+1}(\ldots)+\cdots+\varepsilon^{nd+ne}\varphi_{e}^{\otimes n}
\end{equation}
We show that this implies $\psi^{\otimes n}\otimes GHZ_{ne+1}\slto\varphi^{\otimes n}$. To this end let
\begin{equation}
A'_m=\sum_{j=1}^{ne+1}A_m(e^{\frac{2\pi i}{ne+1}j})^{\otimes n}\otimes|0\rangle\langle j|
\end{equation}
for $m=1,\ldots,k$ and
\begin{equation}
GHZ'_{ne+1}=\frac{1}{ne+1}\sum_{j=1}^{ne+1}e^{-\frac{2\pi i}{ne+1}ndj}|j\ldots j\rangle
\end{equation}
Then clearly $GHZ_{ne+1}\slto GHZ'_{ne+1}$ and
\begin{multline}
(A'_1\otimes\cdots\otimes A'_k)(\psi^{\otimes n}\otimes GHZ'_{ne+1})  \\
  = \frac{1}{ne+1}\sum_{j=1}^{ne+1}e^{-\frac{2\pi i}{ne+1}ndj}\left(A_1(e^{\frac{2\pi i}{ne+1}j})^{\otimes n}\otimes\cdots\otimes A_k(e^{\frac{2\pi i}{ne+1}j})^{\otimes n}\right)(\psi^{\otimes n})\otimes|0\ldots 0\rangle  \\
  = \frac{1}{ne+1}\sum_{j=1}^{ne+1}e^{-\frac{2\pi i}{ne+1}ndj}\left(e^{\frac{2\pi i}{ne+1}ndj}\varphi^{\otimes n}+e^{\frac{2\pi i}{ne+1}(ndj+j)}(\ldots)+\cdots+e^{\frac{2\pi i}{ne+1}(ndj+nej)}\varphi_e^{\otimes n}\right)\otimes|0\ldots 0\rangle  \\
  = \varphi^{\otimes n}\otimes|0\ldots 0\rangle\slto\varphi^{\otimes n}
\end{multline}
By Proposition \ref{prop:basica} we have
\begin{equation}
\begin{split}
\omega(\psi,\varphi)
 & \le \omega(\psi,\psi^{\otimes n}\otimes GHZ_{ne+1})\omega(\psi^{\otimes n}\otimes GHZ_{ne+1},\varphi^{\otimes n})\omega(\varphi^{\otimes n},\varphi)  \\
 & \le \left(\omega(\psi,\psi^{\otimes n})+\omega(\psi,GHZ_{ne+1})\right)\frac{1}{n}  \\
 & = \left(n+\omega(\psi,GHZ)\log_2(ne+1)\right)\frac{1}{n}\to 1
\end{split}
\end{equation}
as $n\to\infty$.

If $\psi$ is separable across a bipartite cut $\cut{S}{\bar{S}}$ then $\varphi$ is also separable across this cut, because the set of $\cut{S}{\bar{S}}$-biseparable states is closed. But if $\psi=\psi_S\otimes\psi_{\bar{S}}$ degenerates to $\varphi=\varphi_S\otimes\varphi_{\bar{S}}$ then the two parts degenerate separately, so we can lift the condition that $\psi$ is globally entangled.
\end{proof}

With this powerful result at hand it is easy to prove that GHZ states can be transformed to W states by SLOCC asymptotically at rate $1$:
\begin{thm}\label{thm:GHZtoW}
Let $k\in\mathbb{N}$ and $W=|10\ldots 0\rangle+|010\ldots 0\rangle+\cdots+|0\ldots 01\rangle$ be the $k$-partite generalized W state. Then
\begin{equation}
\omega(GHZ,W)=1
\end{equation}
\end{thm}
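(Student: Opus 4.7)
The plan is to prove $\omega(GHZ,W)=1$ by sandwiching it between a lower bound of $1$ and an upper bound of $1$. The lower bound will come from the Schmidt rank inequality in Proposition \ref{prop:basica}(1), while the upper bound will come from exhibiting an explicit degeneration $GHZ\to W$ and invoking the Bini--Sch\"onhage--Strassen theorem stated just above. Neither step looks deep; the value of the statement is really that it shows the Schmidt-rank lower bound is achieved asymptotically despite $GHZ$ and $W$ being SLOCC-inequivalent at the single-copy level.

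For the lower bound, I would observe that for every proper nonempty bipartition $\cut{S}{\bar S}$ with $1\le|S|\le k-1$ one has $\rk_S GHZ=2$ (the two basis states $|0\ldots 0\rangle$ and $|1\ldots 1\rangle$ are distinguishable on both sides) and $\rk_S W=2$ as well, via the decomposition
\begin{equation}
W=\left(\sum_{i\in S}|e_i\rangle_S\right)\otimes|0\ldots 0\rangle_{\bar S}+|0\ldots 0\rangle_S\otimes\left(\sum_{j\in\bar S}|e_j\rangle_{\bar S}\right),
\end{equation}
where $|e_i\rangle$ denotes the computational basis vector with a single excitation in position $i$. Proposition \ref{prop:basica}(1) then yields $\omega(GHZ,W)\ge \log 2/\log 2=1$.

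For the upper bound, the plan is to generalize the approximation (\ref{eq:Wappr1}) from three parties to $k$ parties by choosing, on each factor $\mathbb{C}^2$, the linear maps
\begin{equation}
A_1(\varepsilon)|0\rangle=|0\rangle+\varepsilon|1\rangle,\quad A_1(\varepsilon)|1\rangle=-|0\rangle,\qquad A_i(\varepsilon)|0\rangle=|0\rangle+\varepsilon|1\rangle,\quad A_i(\varepsilon)|1\rangle=|0\rangle\ (i\ge 2).
\end{equation}
Applying $A_1(\varepsilon)\otimes\cdots\otimes A_k(\varepsilon)$ to $GHZ=|0\ldots 0\rangle+|1\ldots 1\rangle$ gives $(|0\rangle+\varepsilon|1\rangle)^{\otimes k}-|0\ldots 0\rangle=\varepsilon W+O(\varepsilon^2)$, so $GHZ$ degenerates to $W$ with error degree $k-1$. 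The Bini--Sch\"onhage--Strassen theorem immediately upgrades this to $\omega(GHZ,W)\le 1$, completing the proof.

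The only thing that could be called an obstacle is arranging the signs in the degeneration so that the leading $|0\ldots 0\rangle$ term cancels uniformly in the parity of $k$; breaking the symmetry on the first factor (by choosing a minus sign only there) does this cleanly. Everything else is a direct application of the infrastructure already assembled in Sections \ref{sec:aSLOCC} and \ref{sec:border}.
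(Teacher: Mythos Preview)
Your proof is correct and follows essentially the same route as the paper: the lower bound via bipartite Schmidt ranks and Proposition~\ref{prop:basica}(1), and the upper bound via an explicit degeneration of $GHZ$ to $W$ combined with the Bini--Sch\"onhage--Strassen theorem. The only cosmetic difference is that the paper first passes to the SLOCC-equivalent state $|0\ldots 0\rangle-|1\ldots 1\rangle$ and uses the \emph{same} local map on every factor, whereas you keep $GHZ=|0\ldots 0\rangle+|1\ldots 1\rangle$ and absorb the sign into the first party's map; your choice has the mild advantage of making the cancellation of the $|0\ldots0\rangle$ term manifestly parity-independent.
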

\begin{proof}
The local rank of both states is $2$ across any bipartition, so $\omega(GHZ,W)\ge 1$. GHZ is SLOCC-equivalent to $|00\ldots 0\rangle-|11\ldots 1\rangle$ and
\begin{equation}\label{eq:Wappr2}
\left(\begin{array}{cc}
1 & -1 \\
\varepsilon & 0
\end{array}\right)\otimes\cdots\otimes
\left(\begin{array}{cc}
1 & -1 \\
\varepsilon & 0
\end{array}\right)(|00\ldots 0\rangle-|11\ldots 1\rangle)=\varepsilon W+O(\varepsilon^2)
\end{equation}
shows that the latter degenerates to W, therefore $\omega(GHZ,W)\le 1$. Note that equation \eqref{eq:Wappr2} is essentially the same as \eqref{eq:Wappr1} for $k=3$ and the error degree is $k-1$ in general.
\end{proof}

This result has also been obtained in \cite{Wrate1}, by showing that the tensor rank of $W^{\otimes n}$ is $O(n^{k-1}2^n)$ for fixed $k$ as $n\to\infty$. Our proof improves this bound to $(n(k-1)+1)2^n$. Note that the best lower bound found so far is $(k-1)2^n-k+2$ \cite{catalysis}.

\section{Symmetric states}\label{sec:symm}

In this section we generalize the result on the asymptotic rank of W states to certain symmetric states. In \cite{catalysis} it was shown that the tensor rank of the Dicke state $D_{m,n}$ that is the symmetrization of $|00\ldots 011\ldots 1\rangle$ with $m$ $0$-s and $n$ $1$-s is $\max\{n,m\}+1$. First we show that its border rank and its asymptotic rank are both $\min\{n,m\}+1$. Without loss of generality we can suppose $m\le n$. We can write
\begin{equation}
\frac{1}{m+1}\sum_{j=1}^{m+1}e^{-\frac{2\pi i}{m+1}mj}\left(|0\rangle+\varepsilon e^{\frac{2\pi i}{m+1}j}|1\rangle\right)\otimes\cdots\otimes\left(|0\rangle+\varepsilon e^{\frac{2\pi i}{m+1}j}|1\rangle\right)=\varepsilon^nD_{m,n}+O(\varepsilon^{2m+1})
\end{equation}
which implies $\brk D_{m,n}\le m+1$. For the lower bound consider the bipartite cut where the first $m$ and the last $n$ subsystems are the two parts. The rank across this cut is precisely $m+1$, so $\brk D_{m+n}\ge m+1$ and the same lower bound holds for the asymptotic rank. For the conversion rate this implies $\omega(GHZ,D_{m,n})=\log_2(\min\{m,n\}+1)$. Remarkably, if we keep the number of ``excitations'' $n$ fixed and let $m\to\infty$ the asymptotic conversion rate remains bounded, while the rank grows linearly.

In the following we investigate the conversion rates of more general symmetric states having similar extensions to more subsystems. We make the following definition:
\begin{defn}
Let $\lambda$ be an integer partition of $k_0$, i.e. $\lambda=(\lambda_1,\ldots,\lambda_d)$ with $\lambda_1\ge\lambda_2\ge\ldots\ge\lambda_d$ and $\lambda_1+\cdots+\lambda_d=k_0$. Let $k\ge k_0$ be an integer and define the following state:
\begin{equation}
D_{\lambda,k}:=\sum_{(i_1,\ldots,i_k)\in I_k}|i_1i_2\ldots i_k\rangle
\end{equation}
where $I_k\subseteq\{0,1,\ldots,d\}^k$ is the set of $k$-tuples containing the entry $i$ at exactly $\lambda_i$ positions for $1\le i\le d$, and $0$ at $k-k_0$ positions.
\end{defn}

In an earlier draft of this paper we proved that $\omega(GHZ,D_{\lambda,k})$ is bounded as $k\to\infty$ by finding an explicit upper bound on $\brk D_{\lambda,k}$ not depending on $k$, but we could not say how tight that bound is. As R. Duan explained to us, the proof in the appendix of \cite{Wrate1} can be formulated in the present framework and gives a much better bound -- in fact one that agrees with the exact value for large $k$. Here we outline how their result translates to a bound on the border rank.

First observe that
\begin{equation}
\frac{1}{\lambda_1!\cdots\lambda_d!}
\left(\frac{\partial}{\partial\varepsilon_1}\right)^{\lambda_1}\cdots\left(\frac{\partial}{\partial\varepsilon_d}\right)^{\lambda_d}\big(|0\rangle+\varepsilon_1|1\rangle+\ldots+\varepsilon_d|d\rangle\big)\otimes\cdots\otimes\big(|0\rangle+\varepsilon_1|1\rangle+\ldots+\varepsilon_d|d\rangle\big)\Big|_{\lambda_i=0}=D_{\lambda,k}
\end{equation}
where the number of tensor factors is $k$. The left hand side can be realized as the limit of finite differences using
\begin{equation}
\frac{\partial}{\partial x}f(x)\Big|_{x=0}=\lim_{h\to 0}\frac{1}{h^n}\sum_{i=0}^n(-1)^i\binom{n}{i}f\big((n-i)h\big)
\end{equation}
for each of the variables. Without the limit, this replacement results in a linear combination depending on $\varepsilon_1,\ldots,\varepsilon_d$ with $(\lambda_1+1)\cdots(\lambda_d+1)$ terms, each of which has rank $1$, therefore the rank of the sum is at most $(\lambda_1+1)\cdots(\lambda_d+1)$. Since this remains true no matter how small values we substitute for $\varepsilon_i$, we can conclude that $\brk D_{\lambda,k}\le(\lambda_1+1)\cdots(\lambda_d+1)$.

In general, for small $k$ we expect the asymptotic rank to become smaller than this upper bound, but using the bipartite rank ref. \cite{Wrate1} shows that the asymptotic rank and therefore also the border rank is at least $(\lambda_1+1)\cdots(\lambda_d+1)$ when $k\ge\lambda_1+\cdots+\lambda_k+\prod_{i=1}^d(\lambda_i+1)$. This means that for large $k$ the asymptotic rank stays constant and is equal to the above bound. Note that, in contrast, the tensor rank $\rk D_{\lambda,k}^{\otimes n}$ grows at least linearly in $k$ for fixed $\lambda$ and $n$, since $D_{\lambda,k}^{\otimes n}\slto D_{|\lambda|,k-|\lambda|}$.

It would be desirable to find the asymptotic rank of any Dicke state $D_{\lambda,k}$. Unfortunately, this appears to be difficult. Coppersmith and Winograd \cite{CW} conjecture that $\omega(GHZ_3,D_{(1,1,1),3})=1$, but proving this would imply that the exponent of matrix multiplication is $2$, a long-standing open problem in algebraic complexity theory.

\section{Transforming W states into GHZ states}\label{sec:WtoGHZ}

We turn to transformations in the opposite direction. It is well known that a W state cannot be transformed into a GHZ state, but it is easy to see that $k-1$ W states are enough to create a GHZ state: if $k-2$ of the parties perform the transformation $|0\rangle\mapsto|0\rangle,|1\rangle\mapsto 0$, then the remaining two end up sharing an EPR pair. Thus $k-1$ W states are enough to create EPR pairs between the first and each of the remaining $k-1$ parties, and then they can use teleportation to produce a GHZ state. In the following we will see that $\omega(W,GHZ)$ is in fact sublinear in the number of parties.

For tripartite systems an upper bound on the conversion rate $\omega(W,GHZ)$ has essentially been computed as a byproduct by Coppersmith and Winograd \cite{CW}. Later Strassen \cite{Strassen3} introduced a family of monotones for any number of parties which show that this upper bound is in fact optimal. In this section we are going to generalize the construction in \cite{CW} and use the monotones to prove optimality, thereby finding the exact values of $\omega(W,GHZ)$ for any number of subsystems.

First we summarize the relevant definitions and theorems by Strassen \cite{Strassen3}, formulated in terms of SLOCC conversion rates and for any number of parties. To this end we need to fix some more notations. The set
\begin{equation}
\Theta:=\left\{(\theta_1,\ldots,\theta_k)\in\mathbb{R}^{k}\big|\theta_1+\ldots+\theta_k=1, \forall j\in[k]:\theta_j\ge 0\right\}
\end{equation}
is the standard $k$-simplex. Given finite sets $I_1,\ldots,I_k$ and a probability measure on (a subset of) $I_1\times\cdots\times I_k$, its marginals are denoted by $P_1,\ldots,P_k$ and for $\theta\in\Theta$ we set
\begin{equation}
H_\theta(P)=\sum_{j=1}^k\theta_jH(P_j)
\end{equation}
with $H(P_j)=-\sum_{i\in I_j}P_j(i)\log_2 P_j(i)$.
For $\emptyset\neq\Psi\subseteq I_1\times\cdots\times I_k$ we set $H_\theta(\Psi)=\max_P H_\theta(P)$ where the maximization is over probability measures on $\Psi$, and extend this as $H_\theta(\emptyset)=-\infty$.

For $f\in \mathcal{H}_1\otimes\cdots\otimes \mathcal{H}_k$ is a tensor and a $k$-tuple of ordered bases $C=((u_{j,i})_{i=1}^{d_j})_{j=1}^k$ (one for each vector space), we can form the coordinate array $(f_{i_1,\ldots,i_k})_{i_j=1}^{d_j}\in\mathbb{C}^{d_1\times\cdots\times d_k}$ of $f$. We introduce the notation
\begin{equation}
\supp_Cf=\{(i_1,\ldots,i_k)|f_{i_1,\ldots,i_k}\neq 0\}\subseteq[d_1]\times\cdots\times[d_k]
\end{equation}
for the support of $f$ with respect to $C$. The set of such $k$-tuples of ordered bases will be denoted by $\mathcal{C}(\mathcal{H}_1,\ldots,\mathcal{H}_k)$ or simply $\mathcal{C}$ when the vector spaces are clear from the context.

For $f\in \mathcal{H}_1\otimes\cdots\otimes \mathcal{H}_k$ and $\theta\in\Theta$ as before, we set
\begin{equation}\label{eq:uppersf}
\rho^\theta(f)=\min_{C\in\mathcal{C}}H_\theta(\supp_Cf)\qquad\text{and}\qquad\zeta^\theta(f)=2^{\rho^\theta(f)}
\end{equation}
and call $\zeta^\theta$ the \emph{upper support functional}.

\begin{thm}[Strassen]
Let $\theta\in\Theta$. For the upper support functional $\zeta^\theta$ the followings hold:
\begin{enumerate}
\item $\zeta^\theta(GHZ_r)=r$ for $r\in\mathbb{N}$
\item $\zeta^\theta(f\oplus g)=\zeta^\theta(f)+\zeta^\theta(g)$ for all tensors $f$,$g$
\item $\zeta^\theta(f\otimes g)\le\zeta^\theta(f)\zeta^\theta(g)$ for all tensors $f$,$g$
\item $f\slto g\implies\zeta^\theta(f)\ge\zeta^\theta(g)$
\item $\zeta^\theta(f)\in[0,d_1^{\theta_1}d_2^{\theta_2}\ldots d_k^{\theta_k}]$ for $f\in \mathcal{H}_1\otimes\cdots\otimes \mathcal{H}_k$ with $d_j=\dim \mathcal{H}_j$ ($j\in[k]$).
\end{enumerate}
\end{thm}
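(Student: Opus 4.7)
My plan is to prove the five properties in turn, with property 4 as the main obstacle and the lower bounds in properties 1 and 2 following from it. Property 5 follows from the elementary entropy bounds $0\le H(P_j)\le\log_2 d_j$ summed weighted by $\theta_j$ (using $\sum_j\theta_j=1$ for the upper bound, and the convention $H_\theta(\emptyset)=-\infty$ giving $\zeta^\theta(0)=0$). For the upper bound in property 1, the standard basis of $(\mathbb{C}^r)^{\otimes k}$ gives $\supp(GHZ_r)=\{(i,\ldots,i):i\in[r]\}$, on which the uniform measure has all marginals uniform on $[r]$, yielding $H_\theta=\log_2 r$.

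For property 3 I would use product bases, where $\supp(f\otimes g)=\supp_C f\times\supp_{C'}g$; a product measure $P_f\times P_g$ has additive marginal entropies, so $H_\theta(P_f\times P_g)=H_\theta(P_f)+H_\theta(P_g)$ and submultiplicativity follows by choosing optimal $C,C'$ separately. For the upper bound in property 2 I would use block-diagonal bases: $\supp(f\oplus g)$ is then a disjoint union on each factor, any measure decomposes as $\alpha P_f+(1-\alpha)P_g$, and the entropy grouping rule $H(P_j)=H_2(\alpha)+\alpha H(P_{f,j})+(1-\alpha)H(P_{g,j})$ summed against $\theta$ and optimised over $\alpha$ yields exactly $2^{H_\theta}=2^{H_\theta(P_f)}+2^{H_\theta(P_g)}$.

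For property 4, given $g=(A_1\otimes\cdots\otimes A_k)f$, I want to show that for each basis $C$ of $\mathcal{H}$ there exists a basis $C'$ of $\mathcal{K}$ with $H_\theta(\supp_{C'}g)\le H_\theta(\supp_C f)$; minimising over $C$ then gives $\rho^\theta(g)\le\rho^\theta(f)$. Setting $r_j=\rk A_j$, the plan is to choose $C'=(v_{j,\ell})$ with $(v_{j,\ell})_{\ell\le r_j}$ a basis of $A_j(\mathcal{H}_j)$. Writing $A_j u_{j,i}=\sum_{\ell\le r_j}B_{j,\ell,i}v_{j,\ell}$ then gives $\supp_{C'}g\subseteq[r_1]\times\cdots\times[r_k]$. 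The cleanest way to get the inequality is to additionally specialise $C$ so that the last $d_j-r_j$ of its vectors span $\ker A_j$; tuples containing a kernel index then contribute zero to $g$, yielding $\supp_{C'}g\subseteq\supp_C f$ directly. I expect the most delicate step to be verifying that this specialisation of $C$ does not increase the infimum defining $\rho^\theta(f)$; this should follow from a $GL$-action plus an upper semicontinuity argument on the Grassmannian of complements to $\ker A_j$.

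The lower bounds in properties 1 and 2 then follow from property 4 combined with explicit reductions. For property 2, projecting each factor $\mathcal{H}_j\oplus\mathcal{K}_j$ onto $\mathcal{H}_j$ gives $f\oplus g\slto f$ and symmetrically $f\oplus g\slto g$; property 4 yields $\zeta^\theta(f\oplus g)\ge\max\{\zeta^\theta(f),\zeta^\theta(g)\}$, and the stronger additive bound comes from running the grouping argument in reverse, using that in any basis the support splits into two pieces separable by the two projections. For property 1 lower bound, iterating $GHZ_r=GHZ_{r-1}\oplus GHZ_1$ in the splitting $\mathbb{C}^r=\mathbb{C}^{r-1}\oplus\mathbb{C}$, together with the trivial $\zeta^\theta(GHZ_1)=1$ and the property 2 additivity just established, gives $\zeta^\theta(GHZ_r)=r$ by induction.
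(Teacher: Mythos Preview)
The paper does not give its own proof of this theorem; it simply refers to Strassen's paper. So your proposal must stand on its own.

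Your arguments for the upper bound in (1), for (3), for (5), and for the upper bound $\zeta^\theta(f\oplus g)\le\zeta^\theta(f)+\zeta^\theta(g)$ in (2) are essentially correct. (In (3) you should also note that $H_\theta(\Phi_1\times\Phi_2)\le H_\theta(\Phi_1)+H_\theta(\Phi_2)$ holds for \emph{every} measure on the product, not only product measures; this is immediate from subadditivity of Shannon entropy applied to each marginal $P_j$ on $[d_j]\times[e_j]$.)

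There are, however, two genuine gaps.

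\textbf{Property 4.} Your plan to specialise $C$ so that its last $d_j-r_j$ vectors span $\ker A_j$, and then invoke semicontinuity to argue that this does not raise the infimum, does not work. The map $C\mapsto\supp_C f$ is \emph{lower} semicontinuous: if $C^{(n)}\to C$ then $\supp_C f\subseteq\supp_{C^{(n)}}f$ for large $n$, since nonzero coordinates remain nonzero under perturbation. Hence $H_\theta(\supp_{C^{(n)}}f)\ge H_\theta(\supp_C f)$, the wrong direction. Worse, the set of optimal bases $\{C:H_\theta(\supp_C f)=\rho^\theta(f)\}$ is a \emph{closed} subvariety of $\prod_j GL(\mathcal{H}_j)$ (cut out by vanishing conditions), as is the set of bases adapted to the $\ker A_j$; two proper closed subvarieties need not meet. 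Strassen's route is different: keeping the source basis $C$ fixed, one builds the target basis from the vectors $A_j u_{j,i}$ by a Gaussian--elimination--type selection, producing coordinate-wise maps $\phi_j:[d_j]\to[e_j]$ with $\supp_{C'}g\subseteq(\phi_1\times\cdots\times\phi_k)(\supp_C f)$; then $H_\theta$ decreases by the data-processing inequality for the marginal entropies.

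\textbf{Property 2, lower bound.} The claim that ``in any basis the support splits into two pieces separable by the two projections'' is false: a basis $(w_{j,m})$ of $\mathcal{H}_j\oplus\mathcal{K}_j$ need not respect the direct-sum decomposition, and after a generic basis change the coordinate array of $f\oplus g$ has no block structure at all. The inequality $\zeta^\theta(f\oplus g)\ge\zeta^\theta(f)+\zeta^\theta(g)$ is one of the nontrivial parts of Strassen's argument. His proof takes an arbitrary basis of each $\mathcal{H}_j\oplus\mathcal{K}_j$, runs a greedy rank procedure on the projected vectors to produce bases of $\mathcal{H}_j$ and $\mathcal{K}_j$ together with a partition of $[d_j+e_j]$, and then relates $\supp_C(f\oplus g)$ to the two resulting supports through that partition and the entropy grouping identity. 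Your inductive derivation of the lower bound in (1) from (2) is fine once (2) is fully established.
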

The proof can be found in \cite[sec. 2]{Strassen3} (explicitly only for $k=3$, but the generalization is straightforward).

For $f\in \mathcal{H}_1\otimes\cdots\otimes \mathcal{H}_k$ and $\theta\in\Theta$ as before, we set
\begin{equation}\label{eq:lowersf}
\rho_\theta(f)=\max_{C\in\mathcal{C}}H_\theta(\max\supp_Cf)\qquad\text{and}\qquad\zeta_\theta(f)=2^{\rho_\theta(f)}
\end{equation}
where $\max\supp_Cf$ denotes the set of maximal points in the support with respect to the product partial order, and call $\zeta_\theta$ the \emph{lower support functional}.

\begin{thm}[Strassen]
Let $\theta\in\Theta$. For the lower support functional $\zeta_\theta$ the followings hold:
\begin{enumerate}
\item $\zeta_\theta(GHZ_r)=r$ for $r\in\mathbb{N}$
\item $\zeta_\theta(f\oplus g)\ge\zeta_\theta(f)+\zeta_\theta(g)$ for all tensors $f$,$g$
\item $\zeta_\theta(f\otimes g)\ge\zeta_\theta(f)\zeta_\theta(g)$ for all tensors $f$,$g$
\item $f\slto g\implies\zeta_\theta(f)\ge\zeta_\theta(g)$
\item $\zeta_\theta(f)\in[0,d_1^{\theta_1}d_2^{\theta_2}\ldots d_k^{\theta_k}]$ for $f\in \mathcal{H}_1\otimes\cdots\otimes \mathcal{H}_k$ with $d_j=\dim \mathcal{H}_j$ ($j\in[k]$).
\end{enumerate}
\end{thm}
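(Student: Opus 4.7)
The five items parallel Strassen's treatment of the upper support functional $\zeta^\theta$, adapted to the ``$\max$ over bases'' and ``maximal support'' definition; (5), (1), (3), and (4) go through by essentially dualized versions of those arguments, whereas (2) is the main new difficulty, since we now need a $\ge$ rather than a $\le$ direction.

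For (5), any distribution on $\max\supp_C f\subseteq[d_1]\times\cdots\times[d_k]$ has marginal $P_j$ supported on $[d_j]$, so $H(P_j)\le\log_2 d_j$ and $H_\theta(P)\le\log_2\prod_j d_j^{\theta_j}$. The upper bound in (1) is then (5) applied inside $(\mathbb{C}^r)^{\otimes k}$; for the lower bound, use the standard basis on parties $1,\ldots,k-1$ and the reversal $|j\rangle\mapsto|r+1-j\rangle$ on party $k$, so that $GHZ_r$ has support $\{(j,\ldots,j,r+1-j):j\in[r]\}$, an antichain of size $r$ with uniform marginals, attaining $H_\theta=\log_2 r$. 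For (4), given $g=(A_1\otimes\cdots\otimes A_k)f$, first restrict each $\mathcal{K}_j$ to $\mathrm{Image}(A_j)$ so that $A_j$ becomes surjective; given an optimal basis $C'$ for $g$ and a section $\sigma_j$ of $A_j$, build a basis of $\mathcal{H}_j$ by placing any basis of $\ker A_j$ at low indices and the pullbacks $\sigma_j(u_{j,i})$ at the remaining (higher) indices. The coefficients of $f$ on ``top-corner'' tuples then match those of $g$, and any tuple strictly above a maximal element of $\supp_{C'} g$ is itself a top-corner tuple and therefore also vanishes for $f$; so $\max\supp_{C'} g$ lifts into $\max\supp_C f$ with the same marginal entropies, yielding $\rho_\theta(f)\ge\rho_\theta(g)$.

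For (3), take the tensor-product basis ordered lexicographically on pairs; a short check on the product partial order shows that $\max\supp(f\otimes g)=\max\supp f\times\max\supp g$, and the product of optimal distributions attains $H_\theta(p)+H_\theta(q)$, giving $\rho_\theta(f\otimes g)\ge\rho_\theta(f)+\rho_\theta(g)$.

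The serious step is (2). Given optimal distributions $p,q$ on $\max\supp_C f$ and $\max\supp_{C'} g$, the grouping identity for disjoint-support mixtures yields $H_\theta(\alpha\tilde p+(1-\alpha)\tilde q)=\log_2\!\big(2^{\rho_\theta(f)}+2^{\rho_\theta(g)}\big)$ when $\alpha=2^{\rho_\theta(f)}/\big(2^{\rho_\theta(f)}+2^{\rho_\theta(g)}\big)$, so it suffices to find a basis $C''$ of $\mathcal{H}_j\oplus\mathcal{K}_j$ in which the embeddings of $\max\supp_C f$ and $\max\supp_{C'} g$ are pairwise incomparable, and hence together form an antichain inside $\max\supp_{C''}(f\oplus g)$. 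The naive concatenated basis fails because one block coordinatewise dominates the other, killing the maximality of one of the pieces. The plan is to construct $C''$ via a suitable reordering (and, if necessary, a mild off-diagonal mixing) of the concatenated basis in each factor, interleaving the $\mathcal{H}_j$-positions among the $\mathcal{K}_j$-positions so that every element of one embedded antichain has some coordinate below and some above the corresponding coordinate of every element of the other. Proving that such an interleaving always exists—presumably by first reducing to the singleton case (one maximal element on each side) and then placing maximal elements inductively while maintaining pairwise incomparability as an invariant—is the hard combinatorial step of the whole theorem.
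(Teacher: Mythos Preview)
The paper does not supply its own proof of this theorem; it merely writes ``The proof can be found in \cite[sec.~3]{Strassen3}.'' So there is nothing in the paper to compare your argument against directly. That said, your sketch is essentially Strassen's argument for items (1), (3), (4), (5), and these parts are correct (in (4) the reduction ``restrict each $\mathcal{K}_j$ to $\mathrm{Image}(A_j)$'' is justified a posteriori by applying the surjective case to a left inverse of the inclusion, so there is no circularity).

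Where you go astray is in (2): you have identified the right mechanism but massively overestimated its difficulty. No interleaving, no induction, and no off-diagonal mixing is needed. Simply take the concatenated basis with the blocks ordered \emph{oppositely} on one factor: on factor $1$ list the $\mathcal{H}_1$-basis first and then the $\mathcal{K}_1$-basis, while on factors $2,\dots,k$ list the $\mathcal{K}_j$-basis first and then the $\mathcal{H}_j$-basis. Then every embedded element of $\supp_C f$ has first coordinate at most $\dim\mathcal{H}_1$ and $j$th coordinate ($j\ge 2$) strictly greater than $\dim\mathcal{K}_j$, whereas every embedded element of $\supp_{C'} g$ has first coordinate strictly greater than $\dim\mathcal{H}_1$ and $j$th coordinate at most $\dim\mathcal{K}_j$. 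Any $f$-element is therefore below any $g$-element in coordinate $1$ and above it in coordinate $2$, hence incomparable. Since the embedding within each block is order-preserving, one gets
\[
\max\supp_{C''}(f\oplus g)=\big(\text{embedded }\max\supp_C f\big)\,\sqcup\,\big(\text{embedded }\max\supp_{C'} g\big),
\]
and your mixture/entropy-grouping computation then finishes the proof exactly as you wrote. So the ``hard combinatorial step of the whole theorem'' is in fact a one-line trick; this is what Strassen does.
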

The proof can be found in \cite[sec. 3]{Strassen3}.

The two functionals are related as follows:
\begin{thm}[Strassen]
For any $\theta\in\Theta$ and any pair of tensors $f,g$ the inequality $\zeta^\theta(f\otimes g)\ge\zeta^\theta(f)\zeta_\theta(g)$ holds, so in particular, $\zeta_\theta(g)\le\zeta^\theta(g)$.
\end{thm}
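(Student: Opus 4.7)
The plan is to prove the additive form $\rho^\theta(f\otimes g)\ge\rho^\theta(f)+\rho_\theta(g)$; exponentiating and taking the minimum over bases then yields the multiplicative inequality, while the particular case $\zeta_\theta(g)\le\zeta^\theta(g)$ follows by specializing $f$ to $GHZ_1$, for which $\zeta^\theta(f)=1$ by property 1 of the upper support functional and $f\otimes g\cong g$ as a $k$-partite tensor.

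Fix an arbitrary basis $C'\in\mathcal{C}(\mathcal{H}_1\otimes\mathcal{K}_1,\ldots,\mathcal{H}_k\otimes\mathcal{K}_k)$; I would bound $H_\theta(\supp_{C'}(f\otimes g))$ from below uniformly in $C'$. To do so I would choose once and for all a basis $D$ of $\mathcal{K}_1,\ldots,\mathcal{K}_k$ realizing the maximum in $\rho_\theta(g)$, and let $\Psi=\max\supp_D g$. The role of maximality appears as follows: for each $y\in\Psi$, contracting $f\otimes g$ on the $\mathcal{K}$-factors with the dual-basis tensor $v_{1,y_1}^*\otimes\cdots\otimes v_{k,y_k}^*$ returns $c_y f$ with $c_y\neq 0$, so $f\otimes g$ carries $|\Psi|$ ``copies'' of $f$ accessible through the dual basis of $D$.

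The central combinatorial step is to leverage these copies, together with the basis $C'$, to produce an injection
\[
\iota:\supp_C f\times\Psi\hookrightarrow\supp_{C'}(f\otimes g)
\]
for a suitable basis $C$ of the $\mathcal{H}_j$'s (induced by $C'$ and $D$), with $\iota$ factoring coordinatewise as $\iota(x,y)_j=\iota_j(x_j,y_j)$ and each $\iota_j$ injective in $(x_j,y_j)$. Ordering $\mathcal{K}_j$ by $D_j$ induces a filtration of $\mathcal{H}_j\otimes\mathcal{K}_j$, and maximality of $y\in\Psi$ ensures that the associated graded piece contains a full copy of $f$ with no interference from non-maximal points of $\supp_D g$, which is what allows one to pin down $C$ and $\iota$.

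Given such an $\iota$, the product distribution $P_f\otimes P_g$, with $P_f$ maximizing $H_\theta$ on $\supp_C f$ and $P_g$ maximizing $H_\theta$ on $\Psi$, pushes forward to a probability distribution on $\supp_{C'}(f\otimes g)$; injectivity of $\iota_j$ ensures its $j$-th marginal has entropy $H(P_{f,j})+H(P_{g,j})$, and summing weighted by $\theta$ yields
\[
H_\theta(\supp_{C'}(f\otimes g))\ge H_\theta(\supp_C f)+H_\theta(\Psi)\ge\rho^\theta(f)+\rho_\theta(g).
\]
Taking the minimum over $C'$ finishes the proof. The hard part is the combinatorial construction of $\iota$ with the required coordinatewise factorization: turning the antichain structure of $\max\supp_D g$ into a bona fide support embedding is where the distinction between $\zeta^\theta$ and $\zeta_\theta$ becomes essential, and is the technical heart of Strassen's argument \cite{Strassen3}.
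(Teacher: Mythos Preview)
The paper does not supply its own proof of this theorem; it simply refers the reader to \cite[sec.~4]{Strassen3}. Your outline is therefore already more than what appears here, and it correctly identifies the architecture of Strassen's argument: pass to the additive inequality $\rho^\theta(f\otimes g)\ge\rho^\theta(f)+\rho_\theta(g)$, fix a basis $D$ realizing $\rho_\theta(g)$ with antichain $\Psi=\max\supp_D g$, and for an arbitrary basis $C'$ of the tensor product use the filtration on each $\mathcal{H}_j\otimes\mathcal{K}_j$ induced by $D_j$, together with the maximality of the points of $\Psi$, to locate enough of $f$ inside $\supp_{C'}(f\otimes g)$ to force the entropy bound. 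Your deduction of $\zeta_\theta(g)\le\zeta^\theta(g)$ via $f=GHZ_1$ is also correct.

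One caution on the sketch: you posit a \emph{single} basis $C$ of the $\mathcal{H}_j$'s and a coordinatewise injection $\supp_C f\times\Psi\hookrightarrow\supp_{C'}(f\otimes g)$. In Strassen's actual construction, bringing $C'_j$ into block form along the $D_j$-filtration yields a basis of $\mathcal{H}_j$ in each block that may depend on the block index $y_j$, so the picture is closer to a family of bases than to one fixed $C$. This does not invalidate the strategy---each block still contributes at least $\rho^\theta(f)$, and the coordinatewise structure of the blocks is what makes the marginal entropies add---but the clean product injection you describe is a simplification of what has to be built. Since you explicitly label this construction the ``technical heart'' and defer it to \cite{Strassen3}, your proposal is consistent with, and indeed more informative than, the paper's own treatment.
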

The proof can be found in \cite[sec. 4]{Strassen3}.

Note that $\rho_\theta(f)$ as a function of $\theta$ is a maximum of affine functions, and hence convex. $\zeta_\theta(f)$ is the composition of $\rho_\theta(f)$ with the increasing convex function $x\mapsto 2^x$ and hence also convex.

We say that a tensor $f$ is \emph{$\theta$-robust} \cite{Strassen3} when $\zeta_\theta(f)=\zeta^\theta(f)$ and \emph{robust} when it is $\theta$-robust for all $\theta\in\Theta$. By the properties of $\zeta^\theta$ and $\zeta_\theta$ we see that $f\in \mathcal{H}_1\otimes\cdots\otimes \mathcal{H}_k$ is $\theta$-robust iff there are $C,C'\in\mathcal{C}(\mathcal{H}_1,\ldots,\mathcal{H}_k)$ and a probability measure $P$ on $\max\supp_Cf$ such that $H_\theta(P)\ge H_\theta(\supp_{C'}f)$.

Let us call $f$ \emph{oblique} when there exists $C\in\mathcal{C}(\mathcal{H}_1,\ldots,\mathcal{H}_k)$ such that $\supp_Cf$ is an antichain in the product partial order (i.e. no two elements are comparable). An oblique tensor is also robust. The sets of $\theta$-robust, robust and oblique tensors are closed under direct sums and tensor products. This implies that for robust tensors the upper and lower support functionals are multiplicative, because one of them is always submultiplicative while the other is always supermultiplicative, and the two agree on tensor powers of a robust tensor. Because of this multiplicativity one can use the upper and lower support functionals to bound the SLOCC conversion rates for robust tensors as follows.

Let $\psi$ and $\varphi$ be robust $k$-partite states. By definition, for each $n$ one can transform $n\omega_n(\psi,\varphi)$ copies of $\psi$ to $n$ copies of $\varphi$, and therefore, for each $\theta\in\Theta$ we have
\begin{equation}
\zeta_\theta(\psi)^{n\omega_n(\psi,\varphi)}=\zeta_\theta(\psi^{\otimes n\omega_n(\psi,\varphi)})\ge \zeta_\theta(\varphi^{\otimes n})=\zeta_\theta(\varphi)^n
\end{equation}
After taking logarithms and letting $n\to\infty$ we get $\omega(\psi,\varphi)\rho_\theta(\psi)\ge\rho_\theta(\varphi)$. Since this holds for any $\theta\in\Theta$ we can also write
\begin{equation}
\omega(\psi,\varphi)\ge\max_{\theta\in\Theta}\frac{\rho_\theta(\varphi)}{\rho_\theta(\psi)}
\end{equation}

The construction in \cite{CW} makes use of Salem-Spencer sets \cite{SalemSpencer}. These are ``large'' sets of numbers containing no nontrivial three-term arithmetic progressions. We introduce the following generalization of Salem-Spencer sets:
\begin{defn}
A subset $S\subset\mathbb{N}$ is \emph{$m$-average-free} if $A_1,A_2,\ldots,A_m,B\in S$ and $A_1+\cdots+A_m=mB$ implies $A_1=A_2=\ldots=A_m=B$.

The maximum size of an $m$-average-free subset $S\subseteq[N]$ will be denoted by $\nu_m(N)$.
\end{defn}
Note that $2$-average-free sets are precisely sets without three-term arithmetic progressions. The next lemma extends the result of Salem and Spencer:
\begin{lem}\label{thm:SalemSpencer}
For any fixed value of $m$ we have $\nu_m(N)=N^{1-o(1)}$ as $N\to\infty$.
\end{lem}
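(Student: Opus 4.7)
The plan is to generalize the Behrend/Salem--Spencer construction from $m=2$ to arbitrary $m$, using the fact that the sphere $\{x \in \mathbb{R}^k : \|x\|^2 = R^2\}$ is strictly convex, so no nontrivial $m$-average of points on the sphere lies on the sphere.

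First I would fix a base $n$ and a digit bound $d$ with $n \ge md$, and consider the integers in $[0, n^k)$ whose base-$n$ representation $(x_0, \ldots, x_{k-1})$ has all digits $x_i \in \{0, 1, \ldots, d-1\}$. The condition $n \ge md$ ensures that adding $m$ such numbers produces no carries. Thus if $A_1, \ldots, A_m, B$ are drawn from this set of $d^k$ integers and $A_1 + \cdots + A_m = mB$, then the corresponding digit vectors $\vec{a}_1, \ldots, \vec{a}_m, \vec{b} \in \{0, \ldots, d-1\}^k$ satisfy $\vec{a}_1 + \cdots + \vec{a}_m = m\vec{b}$ componentwise.

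Next I would partition the $d^k$ digit vectors according to the value of $\|\vec{x}\|^2 = \sum_{i=0}^{k-1} x_i^2$. There are at most $k(d-1)^2 + 1$ possible values, so by pigeonhole some level set $S_R = \{\vec{x} : \|\vec{x}\|^2 = R^2\}$ has size at least $d^k/(k(d-1)^2+1)$. I claim the corresponding integer set $S$ is $m$-average-free: if $\vec{a}_1, \ldots, \vec{a}_m, \vec{b}$ all lie on the sphere of radius $R$ and $\vec{b} = (\vec{a}_1 + \cdots + \vec{a}_m)/m$, then by strict convexity of $\|\cdot\|^2$ we have $\|\vec{b}\|^2 \le (\|\vec{a}_1\|^2 + \cdots + \|\vec{a}_m\|^2)/m = R^2$ with equality iff $\vec{a}_1 = \cdots = \vec{a}_m$; equality holds since $\vec{b}$ is on the sphere, so all the $\vec{a}_i$ coincide with $\vec{b}$.

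Finally I would optimize the parameters. Taking $N = n^k = (md)^k$ and $k = \lceil \log\log N\rceil$ gives $\log d = \log N/\log\log N - \log m$, so
\begin{equation}
\log \nu_m(N) \ge k\log d - \log\!\big(k(d-1)^2 + 1\big) = \log N - O(\log\log N),
\end{equation}
which yields $\nu_m(N) \ge N^{1-o(1)}$. The main conceptual content is the convexity argument, which is the natural analogue of the geometric fact underlying Behrend's theorem; the carry-free addition condition $n \ge md$ is what allows the arithmetic identity to become a vector identity in $\mathbb{Z}^k$, and everything else is bookkeeping to optimize the parameters.
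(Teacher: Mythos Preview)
Your argument is correct in its essentials and reaches the stated conclusion, but it follows a genuinely different route from the paper. The paper uses a direct Salem--Spencer style construction: numbers written in base $md-(m-1)$ whose digit strings contain each symbol $0,1,\ldots,d-1$ exactly $n/d$ times. Carry-freeness then forces the zero positions of $B$ to match those of every $A_j$, and iterating over digit values gives $A_1=\cdots=A_m=B$. The size is the multinomial $n!/((n/d)!)^d$, and taking $n=d$ with $(md-(m-1))^d\approx N$ finishes.

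Your construction is the Behrend variant: carry-free digit vectors restricted to a sphere $\|\vec x\|^2=R^2$, with the $m$-average-free property coming from strict convexity of $\|\cdot\|^2$. This is arguably more conceptual for general $m$ (the convexity step works uniformly in $m$ with no induction on digit values), and quantitatively it yields the stronger Behrend-type saving $\exp(-c\sqrt{\log N})$ rather than the Salem--Spencer saving $\exp(-c\log N/\log\log N)$ if one optimizes more carefully. One small slip in your bookkeeping: with $k\approx\log\log N$ the term $\log(k(d-1)^2+1)$ is of order $\log d\approx \log N/\log\log N$, not $O(\log\log N)$ as you wrote; the displayed error should be $\log N-O(\log N/\log\log N)$. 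This is still $o(\log N)$, so the conclusion $\nu_m(N)=N^{1-o(1)}$ is unaffected.
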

The proof is very similar to that of \cite{SalemSpencer}, therefore we leave it to the appendix.

Now we have everything at hand to find the conversion rate from W to GHZ states.
\begin{thm}\label{thm:WtoGHZ}
Let $k\ge 2$ and consider the $k$-partite states W and GHZ. We have
\begin{equation}
\omega(W,GHZ)=\frac{1}{h\left(\frac{1}{k}\right)}
\end{equation}
with $h(p)=-p\log_2 p-(1-p)\log_2(1-p)$ the binary entropy function.
\end{thm}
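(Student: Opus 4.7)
The support of $W$ in the standard basis is $\{e_1,\ldots,e_k\}\subset\{0,1\}^k$, an antichain in the coordinatewise order, so $W$ is oblique and hence robust. From the general inequality for robust tensors derived in the preceding discussion, $\omega(W,GHZ)\ge\max_{\theta\in\Theta}\rho_\theta(GHZ)/\rho_\theta(W)$. At $\theta=(1/k,\ldots,1/k)$, parametrise a distribution on $\{e_1,\ldots,e_k\}$ by $p_j=P(e_j)$; the $j$th marginal is Bernoulli$(p_j)$, so $\rho_\theta(W)=\max_{p_1+\cdots+p_k=1}\frac{1}{k}\sum_j h(p_j)$. By concavity and symmetry of $h$, this maximum equals $h(1/k)$, attained at $p_j=1/k$; since $\rho_\theta(GHZ)=1$, we conclude $\omega(W,GHZ)\ge 1/h(1/k)$.

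\textbf{Upper bound by a diagonalising construction.} For the matching upper bound I would exhibit, for each sufficiently large $M$ divisible by $k$, an SLOCC transformation $W^{\otimes M}\slto GHZ_R$ with $R=2^{Mh(1/k)-o(M)}$. Expanding $W^{\otimes M}=\sum_{\tau\colon[M]\to[k]}\bigotimes_j|\mathbf{1}_{\tau^{-1}(j)}\rangle$, the first step is to project each party onto the Hamming-weight-$(M/k)$ subspace of its qubits, which restricts the sum to the balanced tuples of type $(M/k,\ldots,M/k)$. The goal is then to pick a subset $B$ of such tuples of size $2^{Mh(1/k)-o(M)}$ with (i) the indicator sets $\{\tau^{-1}(j)\colon\tau\in B\}$ pairwise distinct for each $j$, and (ii) no balanced $\tau'$ outside $B$ satisfying $\tau'^{-1}(j)\in\{\tau^{-1}(j)\colon\tau\in B\}$ for every $j$. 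Under (i) and (ii), projecting each party onto $\mathrm{span}\{|\mathbf{1}_{\tau^{-1}(j)}\rangle\colon\tau\in B\}$ isolates the diagonal $\sum_{\tau\in B}\bigotimes_j|\mathbf{1}_{\tau^{-1}(j)}\rangle$, which is locally equivalent to $GHZ_{|B|}$.

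\textbf{Salem-Spencer input and main obstacle.} Condition (ii) is equivalent to forbidding $k$-tuples $(\tau_1,\ldots,\tau_k)\in B^k$ with $\bigsqcup_j\tau_j^{-1}(j)=[M]$ other than the trivial $\tau_1=\cdots=\tau_k$. Following Coppersmith and Winograd I would encode each balanced $\tau$ as an integer $\phi(\tau)\in[N]$ in such a way that the partition identity forces $\phi(\tau_1)+\cdots+\phi(\tau_k)=k\phi(\bar\tau)$ for some common $\bar\tau$, and then set $B=\phi^{-1}(A)$ for a $k$-average-free set $A\subseteq[N]$ with $|A|=N^{1-o(1)}$ supplied by Lemma~\ref{thm:SalemSpencer}. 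Average-freeness then collapses any admissible $k$-tuple to the trivial case, enforcing (ii), while an injectivity check on $\phi$ restricted to each party's $\tau^{-1}(j)$ takes care of (i). The principal obstacle, and the point where the construction must be genuinely careful, is designing $\phi$ so that its fibres are of subexponential size (so that $|B|\ge 2^{Mh(1/k)-o(M)}$) while the set-theoretic partition condition continues to pull back to the required arithmetic mean; this is where the extension beyond the $k=3$ Coppersmith-Winograd setting relies on the $k$-average-free notion introduced in the preceding section. Combining the two bounds then yields $\omega(W,GHZ)=1/h(1/k)$.
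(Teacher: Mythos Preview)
Your lower bound is correct and is exactly the paper's argument: $W$ is oblique, hence robust, and evaluating the support functional at the symmetric $\theta$ gives $\rho_\theta(W)=h(1/k)$.

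For the upper bound your outline has the right shape (project to balanced weight, use a Salem--Spencer-type set, extract a diagonal), but the mechanism you propose is not the one that actually works, and the gap you flag as ``the principal obstacle'' is real. A single global encoding $\phi$ of balanced $\tau$'s into integers, with the partition identity pulling back to a $k$-term arithmetic-mean relation and with subexponential fibres, does not obviously exist; and even if it did, average-freeness in the image only collapses things up to fibres of $\phi$, so conditions (i) and (ii) would still not follow without further work. The paper avoids this entirely by hashing \emph{each party's string separately} with a random linear map $b_i(I_i)=u_i+\sum_j (I_i)_j v_j\bmod M$. The constraint $I_1+\cdots+I_k=(k-1,\ldots,k-1)$ then forces $b_1+\cdots+b_{k-1}\equiv (k-1)b_k$, so one needs only a $(k-1)$-average-free set (not $k$) to conclude $b_1=\cdots=b_k$. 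This gives a block decomposition, not yet a diagonal: within each block there are still collisions, and the paper removes them by a greedy elimination whose cost is controlled by an \emph{expected-value} count of colliding pairs versus surviving $k$-tuples. The probabilistic hashing and the second-moment/greedy cleanup are the two ideas your sketch is missing; without them the construction does not close.
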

\begin{proof}
First we prove that the right hand side is a lower bound
The support of the W states in the computational basis is $supp_CW=\{(1,0,\ldots,0),(0,1,0,\ldots,0),\ldots,(0,\ldots,0,1)\}$, which is clearly an antichain. Hence W is robust and $\zeta^\theta(W)=\zeta_\theta(W)$ and $\zeta_\theta(W^{\otimes n})=\zeta_\theta(W)^n$. Comparing to GHZ states is especially easy since $\zeta_\theta(GHZ)=2$ independently of $\theta$. As W is invariant under permutations of the subsystems, $\zeta_\theta$ is also invariant under permutations of the coordinates of $\theta$. Together with convexity of $\rho_\theta$ we conclude that
\begin{equation}
\omega(W,GHZ)\ge\max_{\theta\in\Theta}\frac{\rho_\theta(GHZ)}{\rho_\theta(W)}=\max_{\theta\in\Theta}\frac{1}{\rho_\theta(W)}=\frac{1}{\rho_{(\frac{1}{k},\ldots,\frac{1}{k})}(W)}=\frac{1}{\rho^{(\frac{1}{k},\ldots,\frac{1}{k})}(W)}
\end{equation}
To compute $\rho^{\frac{1}{k},\ldots,\frac{1}{k}}(W)$ we observe that $H_{(\frac{1}{k},\ldots,\frac{1}{k})}(P)$ is permutation-invariant and concave (a linear combination of concave functions composed with taking marginals, which are affine maps). This means that the optimal distribution is uniform, and the marginal distributions are $(\frac{1}{k},1-\frac{1}{k})$. Thus we have proved
\begin{equation}
\omega(W,GHZ)\ge\frac{1}{h\left(\frac{1}{k}\right)}
\end{equation}

Next we prove the inequality in the other direction using a generalization of the proof by Coppersmith and Winograd \cite{CW} to $k>3$ parties.
Fix an $\varepsilon>0$. According to prop. \ref{thm:SalemSpencer} there is a constant $C_\varepsilon>0$ such that for all $N$ $\nu_{k-1}(N)\ge C_\varepsilon N^{1-\varepsilon}$. Choose such a $C_\varepsilon$.

For some $n\in\mathbb{N}$ consider the $kn$th tensor power of W, and for simplicity, reverse the roles of $0$ and $1$, so
\begin{equation}
W=|011\ldots 1\rangle+|101\ldots 1\rangle+\ldots+|1\ldots 10\rangle
\end{equation}
The state $W^{\otimes kn}$ can be written a sum of $k^{kn}$ terms. To each such term we associate $k$ vectors $I_1,\ldots,I_k\in\{0,1\}^{kn}$ in the following way. $I_i$ corresponds to the $i$th site, $(I_i)_j$ to the $j$th tensor factor and the value of $(I_i)_j=0$ iff there is $|0\rangle$ at that place and $1$ otherwise. Call $w(I)=\text{number of $1$s in $I$}$ the weight of such a vector.

At each site $i$ apply the transformation which sends the terms with weight not equal to $(k-1)n$ to the zero vector and leaves the rest unaltered. For some $\alpha>0$ to be chosen later take $M=(k-1)\lfloor\alpha^n\rfloor+1$ and let $\mathcal{B}$ be a $k-1$-average-free set with $\max\mathcal{B}<\frac{M}{k-1}$ and $|\mathcal{B}|\ge C_\varepsilon \left(\frac{M}{k-1}-1\right)^{1-\varepsilon}$.

Select independent random integers $0\le v_j<M$ ($j=0,1,\ldots,kn$) and $0\le u_i<M$ ($i=1,\ldots,k-1$) uniformly. For the vectors $I_i$ compute a hash as follows:
\begin{equation}
\begin{split}
b_i(I_i) & =u_i+\sum_{j=1}^{kn}(I_i)_jv_j\qquad\text{for $1\le i<k$ and}  \\
b_k(I_k) & =\frac{1}{k-1}(u_1+\ldots+u_{k-1}+\sum_{j=1}^{kn}(k-1-(I_k)_j)v_j)
\end{split}
\end{equation}
where the operations are to be understood mod $M$ ($k-1$ is invertible mod $M$ because $\gcd(k-1,M)=1$).

By construction, for any term we have $b_1(I_1)+\ldots+b_{k-1}(I_{k-1})-(k-1)b_k(I_k)\equiv 0\pmod{M}$. Now apply local transformations which send to zero the terms for which $b_i(I_i)\notin\mathcal{B}$. The sum of $k-1$ such numbers is the same as their sum mod $M$, so for the remaining terms we have $b_1(I_1)+\ldots+b_{k-1}(I_{k-1})=(k-1)b_k(I_k)$, and therefore by the $k-1$-average-free property $b_1(I_1)=b_2(I_2)=\ldots=b_k(I_k)$.

What we have achieved so far is that the remaining state decomposes into blocks labelled by the $b$ values, so it is in a sense closer to the GHZ form. However, inside each block the terms can follow different patterns. Our goal is to erase some of them with local transformations in such a way that the remaining terms give a GHZ state, i.e. each term is uniquely determined by any of its $k$ tensor factors. To this end, for each $b\in\mathcal{B}$ we make a list
\begin{multline}
L_b=\{(I_1,\ldots,I_k)|w(I_1)=w(I_2)=\ldots=w(I_k)=(k-1)n\text{, } \\ I_1+\ldots+I_k=(k-1,k-1,\ldots,k-1)\text{, }b_1(I_1)=b_2(I_2)=\ldots=b_k(I_k)=b\}
\end{multline}
and consider its elements in some arbitrarily fixed order. For each $k$-tuple in the list if it shares a vector $I_i$ with one $k$-tuple occuring earlier in the list, we set to zero the terms corresponding to one of the other vectors $I_{i'}$ and thus eliminate this $k$-tuple.

We need some control on the number of terms erased, so that we get a large enough GHZ state. The expected number of entries in the list before eliminating is
\begin{equation}
\mean|L_b|=\binom{kn}{n,n,\ldots,n}M^{-(k-1)}
\end{equation}
because $\binom{kn}{n,n,\ldots,n}$ is the number of compatible vectors, $\Pr(B_i(I_i)=b)=M^{-1}$ and these events for $1\le i\le k-1$ are independent.

For any $i=1,\ldots,k$ the expected number of pairs of $k$-tuples with common $i$th vector is
\begin{equation}
\frac{1}{2}\binom{kn}{n,n,\ldots,n}\left(\binom{(k-1)n}{n,n,\ldots,n}-1\right)M^{-(k-1)}M^{-(k-2)}
\end{equation}
Here the $\frac{1}{2}$ is needed to count unordered pairs, the first binomial coefficient is the number of ways one can select a $k$-tuple of vectors $(I_1,\ldots,I_k)$, the next factor is the number of $k$-tuples $(I'_1,\ldots,I'_k)$ with $I_i=I'_i$ and $M^{-(k-1)}M^{-(k-2)}$ is the probability for each of these pairs to end up in the block $b$.

Suppose we are about to eliminate a vector $I_i$ because a pair of $k$-tuples shares the vector $I_{i'}$ at their $i'$th position. If $L$ of the remaining $k$-tuples share this $I_i$ then setting the terms with $I_i$ to zero eliminates these $L$ $k$-tuples while eliminating at least $\binom{L}{2}+1$ colliding pairs (those sharing $I_i$ and at least one pair sharing $I_{i'}$). Since $\binom{L}{2}+1\ge L$ during this procedure we eliminate at least as many pairs of colliding $k$-tuples as $k$-tuples. Using $M=(k-1)\lfloor\alpha^n\rfloor+1=\alpha^{n+o(n)}$ the expected number of remaining $k$-tuples on one list is at least
\begin{multline}
\binom{kn}{n,\ldots,n}M^{1-k}-\frac{k}{2}\binom{kn}{n,\ldots,n}\left(\binom{(k-1)n}{n,\ldots,n}-1\right)M^{3-2k}  \\
 = k^{kn+o(n)}\left(\alpha^{(n+o(n))(1-k)}-(k-1)^{(k-1)n+o(n)}\alpha^{(3-2k)n+o(n)}\right)  \\
 = k^{kn+o(n)}\left((\alpha^{1-k+o(1)})^n-((k-1)^{k-1+o(1)}\alpha^{3-2k+o(1)})^n\right)
\end{multline}
As long as $(k-1)^{k-1}\alpha^{3-2k}<\alpha^{1-k}$ this grows like $(k^k\alpha^{1-k+o(1)})^n$. Take $\alpha=(k-1)^{\frac{k-1}{k-2}}+\varepsilon$ to ensure this.

The number of lists is $|\mathcal{B}|=\alpha^{n+o(n)}$, so the expected number of remaining $k$-tuples is at least
\begin{equation}
\left(k^k((k-1)^{\frac{k-1}{k-2}}+\varepsilon)^{2-k+o(1)}\right)^n
\end{equation}
It follows that there is a constant $C'_\varepsilon$ and for each $n$ particular values of the random variables $u_i,v_j$ such that the number of $k$-tuples remaining after the elimination is at least
\begin{equation}
N_n\ge C'_\varepsilon \left(k^k((k-1)^{\frac{k-1}{k-2}}+\varepsilon)^{2-k-\varepsilon}\right)^n
\end{equation}
The resulting state is $GHZ_{N_n}$, and therefore
\begin{multline}
\omega(W,GHZ)\le\omega(W,W^{\otimes kn})\omega(W^{\otimes kn},GHZ_{N_n})\omega(GHZ_{N_n},GHZ)  \\
\le\frac{kn}{\log_2 N_n}=\frac{kn}{\log_2C'_\varepsilon+n\log_2\left(k^k((k-1)^{\frac{k-1}{k-2}}+\varepsilon)^{2-k-\varepsilon}\right)}
\end{multline}
Now let $n\to\infty$ and then $\varepsilon\to 0$.

\end{proof}

Note that as $k\to\infty$ the sequence $\omega(W,GHZ)$ is asymptotically equal to $\frac{k}{\log_2k}$. In contrast, the trivial upper bound was linear in $k$.

\section{Conclusion}

We used the concept of border rank from algebraic complexity theory to bound asymptotic SLOCC conversion rates from GHZ to symmetric states. In the case of multiqubit W and Dicke states this bound gives the exact values. In some cases this results in a dramatic improvement to the bound obtained using tensor rank of an arbitrary fixed number of copies.

In the other direction, we prove that $n$ copies of the $k$-qubit W state can be transformed into $h(\frac{1}{k})n+o(n)$ GHZ states. This follows from a generalization of a construction by Coppersmith and Winograd, optimal as shown using the monotones introduced by Strassen. These monotones are functions $\zeta^\theta$, $\zeta_\theta$ of the states (see eqs. \eqref{eq:uppersf} and \eqref{eq:lowersf}), not increasing under SLOCC, and are not restricted to the asymptotic regime, although they become particularly powerful in this setting.

Observe that a $k$ qubit $W$ state is the symmetrization of $|100\ldots 0\rangle$ and the empirical distribution of this sequence of bits is $(\frac{1}{k},1-\frac{1}{k})$, the entropy of which gives the rate at which GHZ states can be extracted from W states. It seems likely that our proof can be extended to show that
\begin{equation}
\omega(D_\lambda)=\frac{1}{H\left(\frac{\lambda_1}{|\lambda|},\frac{\lambda_2}{|\lambda|},\ldots,\frac{\lambda_k}{|\lambda|}\right)}
\end{equation}
with $H(p_1,\ldots,p_k)=-\sum_i p_i\log_2p_i$. Simple calculation shows that here the denominator is equal to $\rho_\theta(\omega(D_\lambda))$, so the right hand side is a lower bound on the left hand side.

It would be desirable, but much more difficult to find the conversion rates from GHZ states to an arbitrary $D_{\lambda,k}$, or at least the limit for large $k$. We leave these problems as open questions for further study.

\section{Acknowledgements}

We would like to thank Harry Buhrman, Peter B\"urgisser and Runyao Duan for helpful discussions. We acknowledge a Sapere Aude grant of the Danish Council for Independent Research, an ERC Starting Grant, the CHIST-ERA project ``CQC'', an SNSF Professorship, the Swiss NCCR ``QSIT'' and the Swiss SBFI in relation to COST action MP1006.

\appendix

\section{Generalized Salem-Spencer sets}

\begin{proof}[Proof of Lemma \ref{thm:SalemSpencer}]
Let $d,n\in\mathbb{N}$ with $d|n$ and consider the set
\begin{multline}
S_m(d,n)=\bigg\{a_1+(md-(m-1))a_2+\ldots+(md-(m-1))^{n-1}a_n\bigg|\text{$(a_1,\ldots,a_n)$ contains}  \\  \text{$\frac{n}{d}$ $0$s, $1$s, \ldots,$d-1$s}\bigg\}
\end{multline}
Clearly
\begin{equation}
\left|S_m(d,n)\right|=\frac{n!}{\left(\frac{n}{d}!\right)^{d}}\qquad\text{and}\qquad \max S_m(d,n)\le(md-(m-1))^n
\end{equation}
This set is $m$-average-free, because summing $m$ elements cannot produce a carry digit in base $(md-(m-1))$ and therefore a $0$ digit in the sum forces all the terms to have $0$ at that same position, and by induction this is true for all digits.

For a given $N\in\mathbb{N}$ take the unique $d$ such that $(md-(m-1))^d\le N<(m(d+1)-(m-1))^{d+1}$. With this choice $S_m(d,d)\subseteq[N]$ and therefore
\begin{equation}
\begin{split}
1
 & \ge\frac{\log\nu_m(N)}{\log N}\ge\frac{\log|S_m(d,d)|}{\log N}\ge\frac{\log d!}{\log(m(d+1)-(m-1))^{d+1}}  \\
 & =\frac{\log d!}{(d+1)\log(md-1)}=\frac{d\log d-d+\log\sqrt{2\pi d}+O(\frac{1}{d})}{(d+1)\log(md+1)}\to 1
\end{split}
\end{equation}
as $N\to\infty$ and therefore also $d\to\infty$.
\end{proof}

\end{document}